\pgfplotsset{width=9cm}
\newcommand\resetstackedplots{
\makeatletter
\pgfplots@stacked@isfirstplottrue
\makeatother
\addplot [forget plot,draw=none] coordinates{(1,0) (2,0) (3,0) (4,0) (5,0) (6,0) (7,0) (8,0)};
}
\lstdefinestyle{RDF}{basicstyle=\ttfamily,
                        keywordstyle=\lstuppercase,
                        emphstyle=\itshape,
                        showstringspaces=false,
                        }
\newcommand{\lstuppercase}{\uppercase\expandafter{\expandafter\lst@token
                           \expandafter{\the\lst@token}}}
\newcommand{\jyoticomment}[1]{}
\lstdefinestyle{RDF}{basicstyle=\ttfamily,
                        keywordstyle=\lstuppercase,
                        emphstyle=\itshape,
                        showstringspaces=false,
                        }
\lstdefinelanguage[RDF]{SPARQL}[]{SPARQL}{
  morekeywords={SELECT, WHERE, FILTER, PREFIX},
}
\lstdefinestyle{customsparql}{
  belowcaptionskip=1\baselineskip,
  breaklines=true,
  xleftmargin=\parindent,
  language=SPARQL,
  showstringspaces=false,
  basicstyle=\scriptsize\ttfamily\color{blue},
  keywordstyle=\bfseries\color{green!40!black},
  commentstyle=\scriptsize\ttfamily\color{blue},
  identifierstyle=\color{blue},
  stringstyle=\color{blue},
}
\newcommand{\streak}{\textsc{Streak}\xspace}
\newcommand{\squad}{{\textsf{S-QuadTree}}\xspace}
\begin{document}

\title{STREAK: An Efficient Engine for Processing Top-$k$ SPARQL Queries with Spatial Filters}
\author{Jyoti Leeka} 
\affiliation{
	\institution{IIIT Delhi}
	\city{New Delhi} 
	\state{Delhi}
	\country{India}
}
\email{jyotil@iiitd.ac.in}

\author{Srikanta Bedathur} 
\affiliation{
	\institution{IBM Research}
	\city{New Delhi} 
	\state{Delhi}
	\country{India}
}
\email{sbedathur@in.ibm.com}

\author{Debajyoti Bera}
\affiliation{
	\institution{IIIT Delhi}
	\city{New Delhi} 
	\state{Delhi}
	\country{India}
}
\email{dbera@iiitd.ac.in}

\author{Sriram Lakshminarasimhan$~^{\dagger}$}\thanks{$^\dagger$now at Google, India}
\affiliation{
	\institution{IBM Research}
	\city{New Delhi} 
	\state{Delhi}
	\country{India}
}
\email{sriram.ls@gmail.com}

\begin{abstract} 
The importance of geo-spatial data in critical applications such as emergency response, transportation, agriculture etc., has prompted the adoption of recent GeoSPARQL standard in many RDF processing engines. In addition to large repositories of geo-spatial data --e.g., LinkedGeoData, OpenStreetMap, etc.-- spatial data is also routinely found in automatically constructed knowledge-bases such as Yago and WikiData. While there have been research efforts for efficient processing of spatial data in RDF/SPARQL, very little effort has gone into building end-to-end systems that can holistically handle complex SPARQL queries along with spatial filters.

In this paper, we present \textsc{Streak}, a RDF data management system that is designed to support a wide-range of queries with spatial filters including complex joins, top-$k$, higher-order relationships over spatially enriched databases. \textsc{Streak} introduces various novel features such as a careful identifier encoding strategy for spatial and non-spatial entities, the use of a semantics-aware Quad-tree index that allows for early-termination and a clever use of adaptive query processing with zero plan-switch cost. We show that \streak can scale to some of the largest publicly available semantic data resources such as Yago3 and LinkedGeoData which contain spatial entities and quantifiable predicates useful for result ranking. For experimental evaluations, we focus on \emph{top-k distance join queries} and demonstrate that \textsc{Streak} outperforms popular spatial join algorithms as well as state of the art end-to-end systems like Virtuoso and PostgreSQL.
\end{abstract}
\maketitle

\section{Introduction}
Motivated by its use in critical applications such as emergency response, transportation, agriculture etc., geo-spatial data are part of many large-scale semantic web resources. Efforts to standardise the representation of geo-spatial data and relationships between spatial objects within RDF/SPARQL has resulted in GeoSPARQL standard ~\cite{open2012ogc}, adopted by many RDF data repositories such as LinkedGeoData \cite{auer2009linkedgeodata}, GeoKnow~\cite{ios_geoknow_chapter}, etc. Even the large-scale open-domain knowledge-bases such as Yago~\cite{yago2,cidr_yago3}, WikiData~\cite{vrandevcic2014wikidata}, and DBPedia~\cite{lehmann2015dbpedia} contain significant amounts of spatial data. To make effective use of this rich data, it is crucial to efficiently evaluate queries combining topological and spatial operators --e.g., overlap, distance, etc.-- with traditional graph pattern queries of SPARQL. 

Furthermore, modern knowledge bases contain not only simple (binary) relationships between entities, but also model higher-order information such as uncertainty,  context, strength of relationships, and more. Querying such knowledge bases results in complex workloads involving user-defined ad-hoc ranking, with top-$k$ result cut-offs to help in reasoning under uncertainty and to reduce the resulting overload.  Thus the geo-spatial support is not just limited to efficient implementation of spatial operators that works well with SPARQL graph pattern queries, but also should integrate with top-$k$ early termination too. 

We present \streak, an efficient SPARQL query processing system that can model higher-order facts, support the modeling of geo-spatial objects and queries over them, enable efficient combined querying of graph pattern queries with spatial operators, expressed in the form of a \textsc{filter}, along with top-$k$ requirements over arbitrary user-defined ranking functions, expressed in the \textsc{order by --- limit} clauses. In this paper, we focus our attention on a recently proposed \emph{top-$k$ spatial distance join} (K-SDJ)~\cite{TopKSpatialJoins}, although other spatial operators are also possible within \streak. Following are a few applications of such distance join queries within the context of RDF databases from the GeoCLEF~2006-07~\cite{mandl2007geoclef}:
\begin{asparaenum}[(i)]
\item \textit{Find top wine producing regions around rivers in Europe}
\item \textit{List car bombings near Madrid and rank them by estimated casualties}
\item \textit{Find places near a flooded place which have low rates of house insurance}. 
\end{asparaenum}
Similarly, distance join queries over graph structured data can also be formulated in many diverse scenarios -- e.g., in a job search system one can ask for \textit{top jobs with highest wage which match the skills and are near the location of a given candidate}, which can be expressed as a SPARQL query with a spatial distance \textsc{filter} and top-$k$ ranking on highest wage.

\subsection{Challenge}
A large-body of work exists in relational database research for integrating geo-spatial and top-$k$ early termination operators within generic query processing framework. Therefore, a natural question to ask is whether those methods can be applied in a straightforward manner for RDF/SPARQL as well. Unfortunately, as we delineate below, the answer to this question is in the negative, primarily due to the schema-less nature of RDF data and the self-join heavy plans resulting from SPARQL queries. 

Consider applying one of the popular techniques, which build indexes over joining (spatial) tables during an offline/pre-processing phase ~\cite{transformers}. However, in RDF data, there is only one large triples table, resulting in a single large spatial index that needs to be used during many self-joins making it extremely inefficient. On the other hand, storing RDF in a property-table form is also impractical due to a large number of property-tables -- e.g., in a real-world dataset like BTC (Billion Triples Challenge)~\cite{btc}, there are $484,586$ logical tables~\cite{characteristicSets}.

Next, the unsorted ordering (w.r.t. the scoring function) of spatial attributes during top-$k$ processing prevents the straightforward application of methods that encode spatial entities to speed up spatial-joins~\cite{encodingSpatialRDF}. This is because these approaches assume sorted order of spatial attributes, and hence are able to choose efficient merge joins as much as possible. Finally, state-of-the-art query optimization techniques for top-$k$ queries~\cite{ilyas2004rank} cannot be adopted as-is for spatial top-$k$ since real-world spatial data does not follow uniform distribution assumptions that are used during query optimization. 

\subsection{Contributions}
In this paper, we present the design of the \streak system, and make following key contributions:
\begin{asparaenum}
	\item We present a novel soft-schema aware, \jyoticomment{in-memory} spatial index called \squad. It not only partitions the 2-d space to speed spatial operator evaluation, it compactly stores the inherent soft-schema captured using \textit{characteristic sets}. Experimental evaluations show that the use of \squad results up to \emph{three orders of magnitude} improvements over using state-of-the-art spatial indexes for many benchmark queries.
	\item We introduce a new spatial join algorithm, that utilizes the soft-schema stored in \squad to select optimal set of nodes in \squad. The spatial objects enclosed/overlapping the 2-d partitions of these nodes are used for pruning the search space during spatial joins.
	\item We propose an adaptive query plan generation algorithm called \textit{Adaptive Processing for Spatial filters} (\textbf{APS}), that switches with very low plan-switching overhead, to select between alternate query plans, based on a novel spatial join cost model.
\end{asparaenum}

We implement these features within \textsc{Quark-X}~\cite{quarkX}, a top-$k$ SPARQL processing system based on RDF-3X~\cite{neumann2008rdf,neumann2009scalable}, resulting in a holistic system capable of efficiently handling spatial joins. For experimental evaluation, we make use of two large real-world datasets --- (a) Yago~\cite{yago2,cidr_yago3}, and (b) LinkedGeoData~\cite{auer2009linkedgeodata}. Benchmark queries were focused on K-SDJ queries, and were a combination of subset of queries from GeoCLEF~\cite{mandl2007geoclef} challenge and queries we constructed using query features found important in literature~\cite{diversifiedStressTesting,quarkX,encodingSpatialRDF}. 
We compare the performance of \streak with the system obtained by replacing our \squad based spatial join by synchronous R-tree traversal spatial join algorithm~\cite{syncRTree}; and test end-to-end system performance against PostgreSQL~\cite{postgres} which has in-built spatial support and Virtuoso~\cite{virtuoso}, a state-of-the-art RDF/SPARQL processing system. Our experimental results show that \streak is able to outperform these systems for most benchmark queries by $1-2$ orders of magnitude in both cold and warm cache settings. 

\subsection{Organization}
The rest of the paper is organized as follows: Section~\ref{sec:prelim} briefly introduces the SPARQL query structure, and how spatial distance joins can be expressed. It also outlines a running example query that we use throughout the paper. Next, in Section~\ref{sec:ourFramework} we describe the key features of \streak framework including \squad, node selection algorithm, and APS algorithm. We describe the datasets and queries used in our evaluation in Section~\ref{sec:eval}. Detailed experimental results are given in Section~\ref{sec:experimentalResults} followed by an overview of related work in Section~\ref{sec:relwork} before concluding in Section~\ref{sec:concl}.

\section{Preliminaries}
\label{sec:prelim}
RDF consists of triples, where each triple represents relationship between \textit{subject(s)} and \textit{object(o)}, with name of the relationship being \textit{predicate(p)}. SPARQL is the standard query language for RDF~\cite{w3c2013sparql} and we consider SPARQL queries having the following structure: %
\begin{lstlisting}[deletekeywords=GRAPH]
SELECT [projection clause]
WHERE [graph pattern]
FILTER [spatial distance function]
ORDER BY [ranking function]
LIMIT [top-K results]
\end{lstlisting}
The \texttt{SELECT} clause retrieves results consisting of variables (denoted by ``?" prefix) and their bindings (for details see Figure\ref{fig:RDFKnowledgeGraph}). Graph patterns in WHERE clause (also called triple patterns) are expressed as: $\langle$\textsf{?s ?p ?o}$\rangle$, $\langle$\textsf{?r rdf:subject ?s. ?r rdf:predicate ?p. ?r rdf:object ?o}$\rangle$, where ?r represents fact or reification identifier. Keyword \textsf{prefix rdf} is part of the RDF vocabulary and helps declare parts of an RDF statement identified through its identifier \textit{?r}. 

Next, the \textrm{FILTER} clause restricts solutions to those that satisfy the spatial predicates. In this paper, we focus only on the use of DISTANCE predicate for distance joins. It should be noted that the techniques discussed in this paper are equally applicable to all spatial predicates defined in GeoSPARQL standard  by the Open Geospatial Consortium~\cite{parliament2012}. The ORDER BY clause helps establish the order of results using a user-defined ranking function.  The \textrm{LIMIT} clause  controls the number of results returned. Together, ORDER BY and LIMIT form the top-$k$ result retrieval with ad-hoc user-defined ranking function.

\subsection{Running Example} \label{sec:runex}
\begin{figure}[bt]
\centering
\footnotesize
\fontsize{7.1}{4}
\begin{verbatim}
@prefix xsd: <http://www.w3.org/2001/XMLSchema#>.	
#@ <id1>
:Mosel :grapeVariety :Albalonga.
#@ <id2>
:Mosel :soilType :porus_slate.
#@ <id3>
:Mosel :hasProduction "4500000000"^^xsd:double.
#@ <id4>
:Mosel :hasGeometry "POINT(28.6,77.2)".
#@ <id4>
:Moselle :pollutedBy :pesticide.
#@ <id5>
<id4> :includes ?pest.
#@ <id6>
?pest :concentration ?c.
#@ <id7>
:Moselle :hasMouth :Rhine.
#@ <id8>
:Moselle :source :Vosges_mountains.
#@ <id8>
:Moselle :hasGeometry "LINESTRING((28.3,77.5),(28.4,77.6))".
\end{verbatim}
\label{lst:yagoSubgraph}
\caption{{Example RDF knowledge graph: Describes pollution in wine growing regions of Europe.}}
\label{fig:RDFKnowledgeGraph}
\end{figure}

\begin{figure*}[htb]
\begin{minipage}[c]{0.5\textwidth}
	\includegraphics[width=\linewidth]{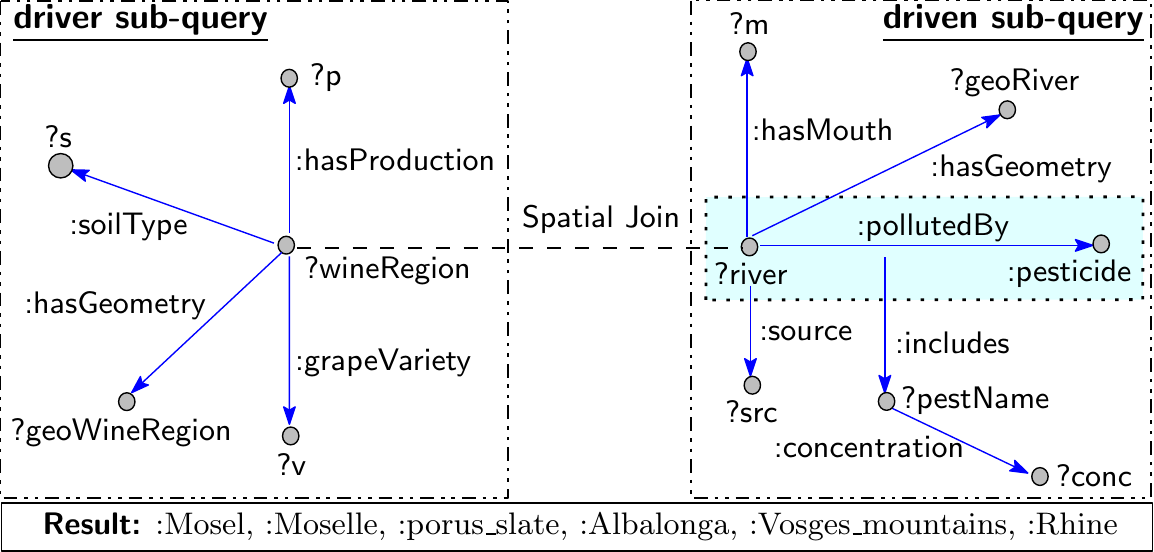}
\end{minipage}\quad\quad
\begin{minipage}[c]{0.45\textwidth}
\centering
\begin{lstlisting}[escapechar=!][frame=single]
SELECT ?wineRegion ?river ?s ?v ?src ?m
  ((f1(?p) * f2(?c)) as ?rank)
WHERE {
  ?wineRegion :grapeVariety ?v.
  ?wineRegion :soilType ?s.
  ?wineRegion :hasProduction ?p.
  !\bfseries{?wineRegion :hasGeometry ?geoWineRegion.}!  
  ?reif rdf:subject ?river.
  ?reif rdf:predicate :pollutedBy.
  ?reif rdf:object :pesticide.
  ?reif :includes ?pestName.
  ?pest :concentration ?c.
  ?river :hasMouth ?m.
  ?river :source ?src.  
  !\bfseries{?river :hasGeometry ?geoRiver.}!  
  FILTER(!\bfseries{distance(?geoWineRegion,?geoRiver)<"10"}!)    
} ORDER BY DESC(?rank) LIMIT 1
\end{lstlisting}%
\end{minipage}
\caption{{Running Example Query: Finds the high-producing wine growing regions of a given soil type and grape variety (driver sub-query) and rivers with high pollution levels and specified mouth and source (driven sub-query), which are no more than 10kms apart (spatial join). The results are ranked based on product of amount of wine production and the river pollution concentration (top-$k$).}}
\label{fig:topksparql}
\end{figure*}

As a running example, we use a reified RDF listing of the knowledge graph shown in Figure~\ref{fig:RDFKnowledgeGraph}. It shows a snippet which contains information about pollution in wine growing regions of Europe. Figure~\ref{fig:topksparql} illustrates a typical top-$k$ SPARQL query on this knowledge graph. This query produces the details of the top-$1$ wine producing region situated near a river, ranked using a function combining its production and pollution levels.

\section{\streak} \label{sec:ourFramework}
In this section, we describe in detail our \streak system for processing top-$k$ spatial join queries. \streak follows exhaustive indexing approach introduced by RDF-3X~\cite{neumann2008rdf}, and extended to support reified RDF stores~\cite{rqrdf3x,quarkX}. This results in indexes over permutations of \textit{subject}, \textit{predicate}, \textit{object} and \textit{reification id} assigned to each statement in the knowledge graph. \streak also maintains an index containing block-level summary of various numerical attributes useful for top-$k$ processing of user-specified scoring function~\cite{quarkX}. We will not discuss the details of these indexes and RDF string / URI encoding strategies, which have been detailed elsewhere~\cite{neumann2008rdf,neumann2009scalable,rqrdf3x,quarkX,encodingSpatialRDF}.

\streak introduces \squad, a novel RDF-specific extension of Quadtree~\cite{finkel1974quad} -- a popular in-memory spatial index structure. \squad stores all the spatial entities in the knowledge graph, and embeds the associated soft-schema information. Its structure along with the associated spatial entity identifier encoding are explained in Section~\ref{subsec:treeBuilding}. 

Next, in Section~\ref{subsec:traverseLeft}, we present the details of the spatial join algorithm which leverages \squad in a careful manner to balance the CPU and IO costs while processing the spatial filter specified in the query. We adopt the terminology from the adaptive query processing literature~\cite{deshpande2007adaptive} and term the left sub-query of the spatial join (see Figure~\ref{fig:topksparql}) as \textbf{driver sub-query}, and the right sub-query as \textbf{driven sub-query}. As part of the spatial adaptive query processing, explained in Section~\ref{subsec:aqp}, \streak chooses the driver and driven sub-query from the original query. 

\subsection{\squad Index for Spatial Entities} 
\label{subsec:treeBuilding}

Underpinning \streak is the \squad index that is designed with compactness and
efficiency in mind, to drive a dynamic query execution strategy that changes
plans during run-time. \streak combines Z-order locality preserving layout for
identifier encoding, quadtrees~\cite{finkel1974quad} for spatial partitioning
and indexing, and characteristics sets for capturing the soft-schema
information related to spatial objects in the corresponding partition. The
intuition behind combining all this information in a single index is motivated
by both the nature of locality within spatial data and the schema-less,
single table philosophy of the RDF. The resulting index, called the \squad
index, enables a more deeply integrated query processing strategy that helps in
early pruning as well. We illustrate the structure of \squad for our running
example query in Figure~\ref{fig:full_quadtree}. 

We note that R-Trees are a popular choice for indexing spatial data. However,
with the design of our indexes and our adaptive query processing strategy, we
observed that \streak outperforms R-Trees in this RDF setting (validated in the
experimental section).

\begin{figure*}[bth]
  \begin{center}
    \includegraphics[width=\linewidth]{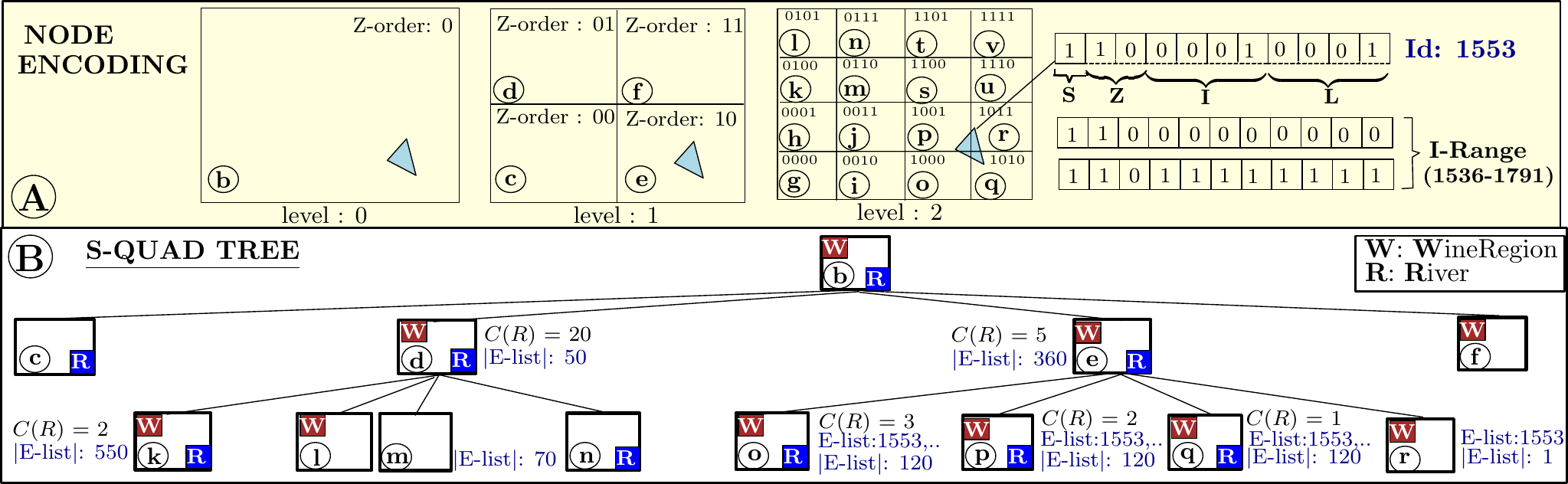}
    \caption[\streak: Toy \squad]{{\textbf{Node encoding:} The triangle is assigned the
identifier based on {\it (z-order = 2, local id = 1, level = 1)}.  \textbf{\squad:}
where each node captures E-List, I-Range, Characteristic Sets for River {\bf
(R)} and Wine {\bf (W)}, Cardinalities for Spatial CS for river {\bf C(R)}, and
MBR (not shown)}}
    \label{fig:full_quadtree}
  \end{center}
\end{figure*}

\subsubsection{Identifier assignment and encoding of spatial entities} Instead of the traditional identifier assignment (e.g., lexicographic~\cite{neumann2008rdf} or compressive~\cite{2016arXiv160404795U}), we assign a unique identifier for each spatial entity that inherently captures its spatial characteristics within the hierarchical partitioning imposed by the quadtree. For both points and polygons,
the identifier value corresponds to the deepest node in the quadtree that fully contains the object. For some polygons, the node that fully contains them 
could be the root of the tree as well. In any case, we encode the identifier
value compactly in the following format: ${\mathbf{(S, Z, I, L)}}$ -- cf. illustrative encoding of the triangular object in Figure~\ref{fig:full_quadtree}(A).

\begin{itemize}
\item $\mathbf{S:}$ is the most-significant bit (MSB) of the identifier. This
indicates whether the identifier corresponds to a spatial or non-spatial
entity. This allows spatial entities --or, more precisely, fact permutations with leading spatial entities-- to be clustered together in storage.

\item $\mathbf{Z:}$ is the Z-order of the deepest-level node that
completely encloses the spatial object. This ensures that in the \textit{ID}
space, those spatial objects with the same \textit{Z-order prefix} are
clustered together for efficient retrieval later.

\item $\mathbf{I:}$ is the local identifier (incrementally assigned) of the
object that distinguishes from other objects within the same node. In the case
of skewed datasets, when the number of spatial objects causes an overflow in the
local identifier bits, then, such identifiers are assigned to the parent node.
The algorithms specified in this paper work even in the above case, without
any loss of generality.

\item $\mathbf{L:}$ is the level / height of the node within the quadtree. The
number of bits occupied by \textit{Z} is $2^{L}$, and the rest of the bits
capture the local identifiers. 
We limit the maximum number of levels in the quadtree to 10 and therefore
${|L|}$ is fixed to 4. We found that there is little benefit in
partitioning a node to have more than a million ($4^{10}$) quadrants. 
\end{itemize}

Our representation not only ensures that the quadtree remains
lightweight, but also enables co-location of objects in space. By design, we
impose equivalent hierarchies for both quadtrees and Z-curve encoding to
capture the identifiers in the above form. For example, the triangle in Figure~\ref{fig:full_quadtree}(A) is completely
enclosed by node $e$ (deepest node), therefore this polygon is assigned the {\it z-order 2}.

\subsubsection{Storing spatial entities in I-Range and E-list} After assigning identifiers, we start constructing \squad over all spatial entities following the standard quadtree construction algorithm~\cite{finkel1974quad}. During this process, we store the indexed identifiers at each node of the \squad in two different ways so as to help in aggressive pruning of results during query evaluation: 
\begin{asparaenum}[(1)]
\item \textit{I-Range} stores the range (min and max) of
object identifiers which completely lie within the node or any of its children.
Note that the \textit{range} of these objects can be identified from the
Z-order of the node.
	\item \textit{E-List} stores explicitly the identifiers of spatial objects overlapping the node but not fully contained
within the node.  
These
objects overlap more than one child node (necessarily) and therefore are
assigned the Z-order of the parent node. Storing them in E-list helps identify the exact child nodes in the \squad with
which these objects overlap, and thus helps save processing time since prefix range comparisons are faster. These lists are typically quite small as we observed during our experiments.
\end{asparaenum}

Referring to the example in Figure~\ref{fig:full_quadtree}, quadtree nodes $o$, $p$, $q$ and $r$ do not completely enclose the triangle, therefore this object ($ID: 1553$) is explicitly stored in
them -- making the indexed object an \textit{E-list} object in these nodes. 
On the other hand, the same object lies completely inside node $e$, and
hence this object lies in the \textit{I-Range} $[1536, 1791]$ of $e$.

\textit{I-Range} and \textit{E-list} offer two complementary ways to prune
intermediate results during processing. The role of \textit{I-Range}
and \textit{E-list}, and the trade-offs for both during adaptive query execution
are clarified in  subsequent sections.

\subsubsection{Use of Characteristic Sets (CS) in \squad} RDF data usually has a \textit{soft-schema},
stemming from the fact that multiple RDF statements are used to describe the
properties of a subject, and these statements often repeat for similar
subjects. To illustrate this, consider the following query patterns for the
query given in Figure~\ref{fig:topksparql} --- \textsf{(?wineRegion
:grapeVariety ?v), (?wineRegion :soilType ?s), (?wineRegion
:production ?p)}. These statements describe a wine producing region's soil,
variety of grapes grown and soil type, through predicates
:grapeVariety, :soilType and :production. These three predicates are also
strongly correlated in the database since this pairing of predicates holds for
only, but many, wine producing regions. Other entities can likewise be uniquely
identified by the predicates connected to them. This observation has been used
earlier for improving the cardinality estimates of RDF
queries~\cite{characteristicSets}, where they name the set of predicates
connected to an entity as its characteristic set. 

In order to leverage the soft-schema for efficient spatial joins, we store in
each node of quadtree, three types of characteristic sets of all objects
enclosed by that node.
{
\begin{asparaitem}

\item \textbf{Self:} The CS that identifies a spatial object.
For example, for the object \textit{:Mosel}, its CS is
\textit{Vineyard}.

\item \textbf{Incoming:} The CS that are incoming towards the
spatial entity. With \textsf{(:Hochmoselbr\"ucke, :isLocatedIn, :Mosel)}, the incoming
entity is \textit{:Hochmoselbr\"ucke}. And the CS that
identifies \textit{:Hochmoselbr\"ucke} is \textit{Bridge}.

\item \textbf{Outgoing:} The CS that are outgoing from the
spatial entity. For \textsf{(:Mosel, :isIn, :Germany)}, the CS corresponding to
\textit{:Germany} is \textit{Country}.

\end{asparaitem}
}

Note that these characteristic sets are stored in Bloom filters~\cite{Broder02networkapplications} for space efficiency.

In Figure~\ref{fig:full_quadtree}(B), nodes \textit{d} and \textit{e} intersects
spatial objects described by both characteristic sets \textbf{W} and
\textbf{R}, and node \textit{f} intersects spatial objects described by
characteristic set \textbf{W} only. Hence, when performing spatial join between
\textit{driver} sub-query (described by \textbf{W}) and \textit{driven}
sub-query (described by \textbf{R}), the spatial join algorithm needs to visit
only the children of the nodes \textit{d} and \textit{e}. Thus, storing the
CS of spatial objects within the nodes of the quadtree helps
perform focused traversal of the quadtree, and reducing comparisons. 

Next, for each characteristic set at each node, we find the objects of CS that  intersect with the node's spatial extent and maintain their aggregate cardinalities in nodes of \squad. As we discuss later, these statistics help in cost-estimation of spatial joins.

Finally in \squad, the minimum bounding rectangle (MBR) of spatial objects associated with a node is stored at each node. This helps eliminate comparisons over regions of quadtree nodes that are not occupied by spatial entities.

We remark here that by their nature quadtrees --and thus \squad-- are relatively easy to update since it affects only the small number of nodes which overlap with the updated object, without any need to balance the tree. 

\subsection{Spatial Join Algorithm in \streak} \label{subsec:traverseLeft}
We now describe the spatial-join algorithm of \streak which employs \squad for  its efficient implementation. The join algorithm follows the block-wise processing wherein bindings for the driver sub-query are retrieved in blocks. For each block, candidate blocks from the driven plan are found using \emph{sideways information passing}, with which the actual spatial-join is performed. In this process, \squad plays a central role for pruning candidate blocks (i.e., sideways information passing), leading to significant reduction in the number of useless joins. In the following, we describe the three phases of our join algorithm, and use the example from Figure~\ref{fig:full_quadtree}, with the driver sub-query as \textbf{W}\textit{ine producing region} and the driven sub-query as \textbf{R}\textit{iver}.

\renewcommand{\L}{\mathcal{L}}
\newcommand{\V}{\mathcal{V}}
\newcommand{\ie}{{\it i.e.}}
\mathchardef\mhyphen="2D

\subsubsection{Phase 1: Obtaining candidate nodes} In this stage block-wise bindings retrieved from the driver sub-query are used to {\em filter out} spatial regions, essentially \squad nodes, that do not contain results of the spatial join. This is done by traversing the \squad from its root to its leaf nodes by following only nodes that satisfy both the bindings from the driver sub-query as well as characteristic sets of the driven sub-query; we denote this set of nodes by $\V$.
For example in Figure~\ref{fig:full_quadtree}(B), at level 1, we observe that nodes $c$ and $f$ do not satisfy the join as either \textbf{W} or \textbf{R} is missing. However, nodes  $d$ and $e$  participate in the join between \textbf{W} and \textbf{R}, thus \streak traverses only these nodes.

We use $\V_l$ to denote the nodes in $\V$ in level $l$. Due to the nature of the explicitly encoded objects associated with a node, all the spatial objects that are results of the join are contained in $\V_l$ for any level $l$, and in particular, in the leaf-nodes of $\V$ (which we denote by $\V_L$).

\subsubsection{Phase 2: Filtering driven sub-query.}
The \textit{I-Range} and \textit{E-list} associated with a node in the \squad, say $a$, store the spatial objects associated with $a$ and can be used for filtering the driven sub-query using {\em sideways information processing}. However, there may be local density variations in different nodes of the \squad. Therefore, it is important to carefully choose the nodes in $\V$ to be used for filtering. Going forward, we first discuss the process of filtering and associated CPU and IO costs. Then we describe our algorithm to choose appropriate nodes in $\V$ that can be used for efficient filtering.

Next we explain how \textit{I-Range} and \textit{E-list} help in filtering the driven sub-query. \textit{I-Range} and \textit{E-list} helps skip entries which are retrieved from index scans of driven sub-query. Recall from Figure~\ref{fig:full_quadtree}(A) that \textit{I-Range} of a node is the set of identifiers which completely lie within that node e.g. \textit{I-Range} of node \textit{e} is from \textsf{1536} to \textsf{1791}. An RDF fact containing a spatial entity is skipped if it does not lie within \textit{I-range} or \textit{E-list}. Thus \textit{I-Range} and \textit{E-list} can be used for skipping irrelevant entries in the indexes. Note that skipping entries using \textit{I-range} is faster --- to understand this consider the case when the current block does not contain values in between \textit{I-range}, then a skip can be made to the next block by skipping all irrelevant entries in between. However, skipping using a large \textit{E-list} can be expensive, as each RDF fact needs to be potentially checked for skip. We observed that this causes high CPU overhead, as it disrupts the high cache locality of sequential scans~\cite{neumann2009scalable}. Therefore, we associate a CPU cost with $a$ that is proportional to the cardinality of $E\mhyphen list(a)$.

Take for example, Figure~\ref{fig:full_quadtree} in which we assume that $\V$ is the set of all nodes containing both \textbf{W} and \textbf{R}. We observe that the size of \textit{E-list} at child node $k$ is greater than at node $d$. However, notice that node $d$ satisfies the join with estimated cardinality of \textit{C(R)} being 20 whereas, its children node $k$ has estimated cardinality of \textit{C(R)} as 2. Therefore, with respect to fetching of blocks, it is better to filter using the \textit{I-range} and \textit{E-list} of $k$ compared to those of $d$. We capture this by defining the IO cost of a node $a$ as the cardinality of the characteristic set stored at $a$ and represent it as $|CS(a)|$.

Observe that the spatial objects associated with $a$ are also associated with the parent of $a$ -- so it may not be prudent to filter using both $a$ and its parent. Furthermore, the IO cost of a node decreases as the level increases whereas CPU cost increases. Therefore, it is necessary to compute an optimal set of nodes for filtering, say $\V^* \subseteq \V$ such that (a) the nodes in $\V^*$ collectively cover all spatial objects associated with nodes in $\V$, and (b) have minimal IO and CPU footprints. We should mention that nodes in $\V^*$ may come from different levels, e.g., a few possibilities of $\V^*$ in Figure~\ref{fig:full_quadtree}(B) are $\{$\textsf{b}$\}$, $\{\mbox{\textsf{d, e}}\}$, $\{$\textsf{d, o, p, q}$\}$, $\{$\textsf{e, k}$\}$.
To compute $\V^*$, first we associate a $cost(a)$ with every node $a \in \V$
based on the above discussion.
\begin{center}
$$cost(a) = \alpha_{\scriptscriptstyle IO} \overbrace{|CS(a)|}^{\text{IO cost}} + \alpha_{\scriptscriptstyle CPU} \overbrace{|E\mhyphen list(a)|}^{\text{filtering}}$$
\end{center}

Next, for any node $a$, we define $\V^*(a)$ as the optimal set of nodes from $\V$ in the sub-tree rooted at $a$. For computing the cost of $\V^*$, we also need to take into consideration the inherent (CPU) cost of combining the large \textit{E-list}s of nodes in $\V^*$. Obviously, this cost is 0 if there is none or only one {\it E-list} in consideration and is proportional to the size of the combined list (we keep the lists sorted) if there are two or more lists; we use $\alpha_{\scriptscriptstyle merge}$ as the proportionality constant and denote $\alpha_{\scriptscriptstyle merge}|E\mhyphen list(a)|$ by $\xi(a)$. Algorithm~\ref{pseudoDepthEst} explains how to compute the optimal set of nodes by recursively computing the values $\V^*(a)$ for every node $a$, starting from the leaf nodes, based on the following theorem. The theorem uses two additional recursive functions $\sigma^*(a)$ and $\xi^*(a)$. The former represents the cost of $\V^*(a)$ and the latter represents $\sum_{j \in \V^*(a)} \xi(j)$.
In the following theorem, $r$ denotes the root of the \squad, $\gamma(a)$ denotes the children of $a$ that also belong to $\V^*$ and 
$\mu(a)$ stands for $\sum_{j\in \gamma(a)} \xi^*(j)$ if $|\gamma(a)|>1$, and is $0$ otherwise.

\begin{theorem}\label{thm:optimal-subset}
    $\V^*=\V^*(r)$ can be recursively computed using recurrences~{\ref{cutSelectionEq1} and \ref{cutSelectionEq2}}. Furthermore, complexity of computing this is linear in the number of nodes of the \squad.

\noindent For leaf node $a$,
\begin{align}\label{cutSelectionEq1}
\V^*(a), \sigma^*(a), \xi^*(a) = \left\{%
    \begin{array}{@{} ll @{}}%
	    \{\}, 0, 0 &: \text{$a \not \in \V$}\\
	    \{a\}, cost(a),\xi(a) &: \text{$a \in \V$}
	\end{array}
    \right.
\end{align}
\noindent For non-leaf node $a$,
\begin{align}\label{cutSelectionEq2}
    \parbox[c]{.045\textwidth}{$\V^*(a)$,\\$\sigma^*(a)$,\\$\xi^*(a)$} = \left\{
	\begin{array}{l}
	    \{\},0,0\qquad: \text{if } a \not\in \V\\
	    \\
	    \displaystyle \left.\begin{array}{@{} l @{}}
		\{a\}\\
		cost(a)\\
		\xi(a)
	\end{array}\right\}\text{if }\displaystyle cost(a) < \mu(a) + \sum_{j \in \gamma(a)} \sigma^*(j)\\
	    \\
	    \left.\begin{array}{@{} l @{}}
		\bigcup_{j \in \gamma(a)} \V^*(j)\\
		\sum_{j \in \gamma(a)} \sigma^*(j) + \mu(a) \\
		\sum_{j \in \gamma(a)} \xi^*(j)
	    \end{array}\right\} \text{otherwise}
	\end{array}
    \right.
\end{align}
\end{theorem}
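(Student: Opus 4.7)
The plan is to proceed by structural induction on the \squad, establishing simultaneously that $\V^*(a)$, $\sigma^*(a)$, and $\xi^*(a)$ as computed by Recurrences~\ref{cutSelectionEq1} and \ref{cutSelectionEq2} are, respectively, an optimal cover of the spatial objects associated with $\V$-nodes in the subtree rooted at $a$, its total cost under the model just described, and the quantity $\sum_{j \in \V^*(a)} \xi(j)$. The overall theorem then follows by applying the inductive claim at the root $r$.

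The base case (leaf $a$) is immediate from Recurrence~\ref{cutSelectionEq1}: the only candidate covers of a one-node subtree are $\emptyset$ and $\{a\}$, and $\{a\}$ is feasible exactly when $a \in \V$. For the inductive step at an internal node $a$, I would use the standard optimal-substructure argument. Any feasible cover $T$ for the subtree at $a$ falls into exactly one of two classes: either (i)~$T = \{a\}$, which is feasible only when $a \in \V$ and costs $cost(a)$; or (ii)~$a \notin T$, in which case $T = \bigcup_j T_j$, where $T_j$ is the restriction of $T$ to the subtree at child $j$, and each $T_j$ must itself be a feasible cover of that subtree. By induction, the cheapest such $T_j$ has cost $\sigma^*(j)$ and contributes $\xi^*(j)$ to the total E-list mass. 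The extra cost paid when assembling the children's covers at $a$ is a single merge of the $|\gamma(a)|$ already-merged sorted E-lists coming up from the children; by the cost model, this merge costs $\sum_{j \in \gamma(a)} \xi^*(j)$ when $|\gamma(a)| \geq 2$ and $0$ otherwise, which is exactly $\mu(a)$. Recurrence~\ref{cutSelectionEq2} then simply picks the cheaper of (i) and (ii), and the update of $\xi^*(a)$ follows directly from its defining sum.

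The subtle point --- which I expect to be the main obstacle --- is verifying that class~(ii) is always \emph{feasible}, \ie, that $\bigcup_j \V^*(j)$ actually covers every spatial object associated with $a$, so that one is never forced to include $a$ itself purely for coverage. This reduces to a structural fact about the \squad layout described in Section~\ref{subsec:treeBuilding}: an object in $a$'s I-range is by construction assigned to a descendant node and hence picked up by the corresponding $\V^*(j)$, while an object in $a$'s E-list overlaps $a$ without being contained in it and, by the rule for building E-lists, appears in the E-list of every child it crosses, so it is captured by those children's covers as well. I would state this as a brief lemma before invoking the recurrences, so that the case split in Recurrence~\ref{cutSelectionEq2} is genuinely between two feasible options rather than a fallback to an infeasible split.

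Linearity of the running time then follows by inspection. Each \squad node is touched exactly once in a post-order sweep; since a quadtree node has at most four children, the per-node work --- reading $cost(a)$, summing at most four child values of $\sigma^*$ and $\xi^*$, evaluating $\mu(a)$, performing one comparison, and assigning $\V^*(a)$ by a pointer to either $\{a\}$ or the union of the children's sets --- is $O(1)$. Hence the total running time is $O(|\squad|)$, completing the proof.
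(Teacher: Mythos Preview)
Your proposal is correct and follows essentially the same line as the paper's proof: a bottom-up case analysis where, at each internal node $a\in\V$, one compares the singleton $\{a\}$ against the union of the children's optimal covers, with the merge penalty $\mu(a)$ accounting for combining the children's E-lists, and linearity following from the bounded fan-out. The one place you elaborate beyond the paper is the ``subtle point'' about feasibility of the children-only option; the paper handles this more tersely by observing that the coverage requirement is to hit every object associated with $\V_L$ (the leaf nodes of $\V$), and since $\V$ is constructed as a set of root-to-leaf paths, any internal $a\in\V$ has at least one child in $\V$, so the union $\bigcup_{j\in\gamma(a)}\V^*(j)$ already covers the relevant leaves --- your I-Range/E-list lemma is a valid but slightly heavier route to the same conclusion.
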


\begin{proof}
The cases for nodes not in $\V$ are obvious and included for completeness.
We will discuss the other cases with the help of Figure~\ref{fig:possibleCuts_Child_InnerNode}.
When $a$ is a leaf-node in $\V$, e.g., the left child of $E$, the optimal $\V^*(a)$ must be $\{a\}$ since it is the only option. $\sigma^*(a)$ and $\xi^*(a)$ are accordingly assigned.

\begin{figure}[bth]
\begin{center}
\includegraphics[width=0.6\linewidth]{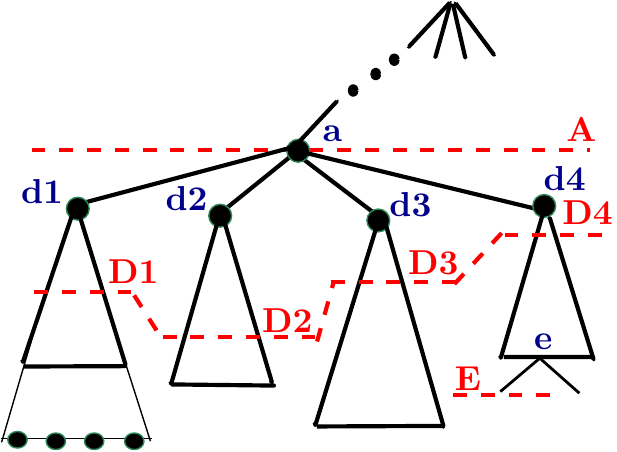}
\caption[Selecting optimal nodes in \squad]{Selecting optimal $\V^*$ in \squad
    \label{fig:possibleCuts_Child_InnerNode}}
\end{center}
\end{figure}

For the other cases, consider any internal node in $\V$, e.g., $a$ in Figure~\ref{fig:possibleCuts_Child_InnerNode}, with four children. As the Figure shows, there are two possibilities for choosing $\V^*(a)$. Either it contains $a$ (second case of recurrence~\ref{cutSelectionEq2}), in which case it should not contain any other node since every spatial object associated with any node in the subtree of $a$ is already associated with $a$. Otherwise, if $a \not\in \V^*(a)$ (third case of recurrence~\ref{cutSelectionEq2}), then we should select the best possible sets in each of $d_i$. The latter are nothing but $D_i$ (note that some of them could be empty) since we have to cover all spatial objects associated with nodes in $\V_L$ and, furthermore, any set of nodes in $d_i$ other than $D_i$ has a cost larger than that of $D_i$. The cost of selecting $a$ in the first case would be simply $cost(a)$. The cost in the second case involves (i)~the cost for selecting the $D_i$s as well as (ii)~that of merging their $E\mhyphen list$s. The cost for case~(ii) is 0 if there is at most one $E\mhyphen list$ and otherwise, is proportional to the total number of objects in all those $E\mhyphen list$s. $\mu(a)$ computes this exact quantity; notice that $\mu(a)=0$ if all but one $D_i$ are empty and one $D_i$ has only one node.

Since $|\gamma(a)| \le 4$, $\mu(a)$ can be computed in constant time and so can be $\sigma^*(a)$, $\xi^*(a)$, $\V^*(a)$ (assuming an efficient implementation of union operation for sets). Therefore, $\V^*=\V^*(r)$ can be computed in linear time in the number of nodes in the \squad.
\end{proof}

\begin{algorithm}[!hbt]
\caption{Computing optimal nodes for filtering}
\label{pseudoDepthEst}
\begin{algorithmic}[1]
\small 
\State \textbf{Input}: candidate nodes $\V$, parameters $\alpha_{\scriptscriptstyle IO}$, $\alpha_{\scriptscriptstyle CPU}$, $\alpha_{\scriptscriptstyle merge}$
\State \textbf{Output}: optimal set of nodes $\V^*$
\For {each node $a$, starting from leaves, in a bottom-up fashion}
\If{$a$ is a leaf node}
    \State Compute $\V^*(a)$, $\sigma^*(a)$ and $\xi^*(a)$ using eq. \ref{cutSelectionEq1} 
\Else
    \State Compute $\V^*(a)$, $\sigma^*(a)$ and $\xi^*(a)$ using eq. \ref{cutSelectionEq2} 
\EndIf
\EndFor
\State \Return $\V^*(r)$
\end{algorithmic}
\end{algorithm}

Algorithm~\ref{pseudoDepthEst} gives pseudo-code of our dynamic programming algorithm that essentially uses the recurrences in Theorem~\ref{thm:optimal-subset} to compute $\V^*$ in an efficient manner.
These optimal sets of nodes are selected for each block of values retrieved from the \textit{driver} sub-query to be subsequently used for filtering.

\subsubsection{Phase 3: Performing the Spatial Join}
Next in order to perform spatial join \streak retrieves spatial objects from the driven sub-query. Naturally \textit{driven} sub-query needs an optimal plan for efficient execution. It determines the optimal plan adaptively using a technique coined \textit{APS} (explained next in Section~\ref{subsec:aqp}). It then performs join between spatial objects retrieved from the driver and driven sub-queries using \squad starting at $\V^*$ as described below.  At each level, objects retrieved from the driver and driven sub-queries --- coined \textit{driver object} and \textit{driven object} --- can overlap with the MBR of none, one or several nodes:
\begin{itemize}
\item \textbf{Overlap with no MBR:} If (i) driver object and driven objects do not overlap with a common MBR, and (ii) driven object is not contained in the node overlapped by driver object, then safely filter both driver and driven objects. This is because driver and driven objects do not intersect with each other and hence can be excluded from further consideration.
\item \textbf{Overlap with one or more MBRs:} If (i) driver object and driven object overlaps with a common MBR, and (ii) driven object is contained in the node overlapped by driver object, then the algorithm recursively checks the children of the node till the level where the diagonal length of the node's MBR is equal to the query distance.
\end{itemize}
In brief, the algorithms keeps finding whether the driver and driven objects intersect or not till it reaches a level -- where the diagonal length of the node's MBR is equal to the query distance. 

\subsubsection{Refinement Step} 
The spatial join gives pairs of candidate spatial objects, however recall that \streak approximated the spatial objects by minimum bounding rectangles (MBRs), hence the distances measured need to be validated. Moving forward we pass these candidate spatial objects to \textit{refinement step} which is performed at the end of every block-wise query execution cycle. During \textit{refinement step}, \streak validates the distance join constraint using object's exact representation. 

\subsection{Adaptive Query Processing for Top-$K$ Spatial Joins}  \label{subsec:aqp}

We begin by first discussing the data access choices available to \streak. A naive way would be to retrieve all tuples from the driver and driven sub-queries and join them. This would naturally will be inefficient for complex queries on large datasets. Another alternative could be to join the driver and driven sub-queries in a tuple-at-a-time manner. This would again be inefficient for complex RDF queries containing many joins because of (1) increased number of function calls required (2) pruning performed by each join operator. The third alternative is to use popular block-wise query execution of columnar stores, which amortizes access costs over a number of tuples. \streak adopts this alternative because of its proven advantages in columnar stores.

The complete block-wise query execution pipeline of \streak is shown in Figure~\ref{fig:queryProcessing}. \streak begins by accessing driver sub-query in block-wise manner, by retrieving block $blk$. It then joins this block with the driven sub-query to find the complete score of tuples based on the given ranking function. These scores are used to compute the threshold $\theta$, which is the score of $k$th result tuple, and the algorithm stops after $k$ results have been found with score above $\theta$ (for descending). Observe that the driven sub-query can joined either block-wise (by retrieving block $blk'$) or using full-scans. These plans, termed as \textit{N-Plan} and \textit{S-Plan} respectively, are adaptively chosen at run-time based on cost-estimates (described later).

Block-wise query processing brings in two more advantages during this phase. First, operating at block-wise granularity offers a balance between low-overhead of selecting different plans for different portions of input relations (\textit{blocks}), according to ranking functions, and reasonable accuracy of cost estimation. Second, the block-wise bounds of numerical attributes and characteristic set information stored in the nodes of the \squad enable us to perform early termination and avoiding self-joins, respectively. The primary function of our adaptive query plan generation algorithm, called \textbf{Adaptive Processing for Spatial filters or APS} (pronounced ``apps''), is thus to identify and route blocks of tuples through customized plans based on statistical properties, relevant to the query execution strategies. 

The traditional approach to query optimization involves choosing an optimal
plan based on statistics prior to query execution, and using that selected plan at runtime. This, however, would end up processing and producing (1) far more intermediate
results than is necessary (2) the entire input cardinality -- which is expensive. \jyoticomment{Even as plan generation is a standard problem in database systems, the schema-free nature of RDF data, along with spatial and top-$k$ join operations, complicate generation and selection of an optimal plan even
further.} Previous top-$k$ query optimization work~\cite{ilyas2004rank} attempted to solve top-$k$ plan generation problem based
on a strong assumption that each tuple in driver sub-query is equally
likely to join with tuples in driven sub-query. However, this does not
frequently hold in tightly coupled triple patterns found in SPARQL queries. Instead, we argue that an
adaptive query processing strategy (AQP) can be a better strategy for these
workloads, as for each block retrieved it will help terminate early and determine the plan based on the selectivity of the block.

\subsubsection{Challenges} Between an optimal plan selection of top-$k$ spatial join
queries with AQP and its corresponding efficient execution, lies several
challenges that need to be overcome.

\begin{enumerate}
  \item In the case of top-$k$ queries, where spatial join essentially
        comprises a self-join between two parts of a dataset, cost estimation 
        is difficult.
  \item Different regions of the data have distinct statistical 
        characteristics and a single plan for all regions performs poorly.
  \item Early out feature of top-$k$ queries requires estimation of input cardinalities of these index scans thus making it even harder to choose an
        optimal plan, and in general, poses many challenges in costing
        rank-join operators.
  \item Given the verbose nature of RDF, finding optimal join order of SPARQL queries is challenging because the number of possible query plans is in the order of factorial of number of index scans required. With many real-world queries containing a hundred index scan, it is very costly to find an optimal plan~\cite{sahoo2010semantic}.
\end{enumerate}

We will see in this section how the design of our indexes naturally enable us
to devise a more effective, adaptive execution strategy for query processing.  

\subsubsection{Plan Selection:} 

We address the above mentioned challenges in \streak, by first breaking down
the plan selection problem into that of finding the least-cost for the
\textbf{driver plan} and the \textbf{driven plan}. We use heuristics and a
cost-based optimization approach~\cite{selinger1979access} to select the query
plan for the \textit{driver sub-query}. The heuristic pushes numerical
predicates deep in the query plan and reduces the search space of join order
permutations. This helps in early-termination, and consequently helps reduce
the query execution time in majority of the cases for top-$k$ queries. This
heuristic has also been used in earlier works SPARQL-RANK~\cite{sparqlrank} and
Quark-X~\cite{quarkX}. For the remaining driver sub-queries, a cost-based
join-order optimization is used.

\begin{figure}[bth]
\begin{center}
	\includegraphics[width=\linewidth]{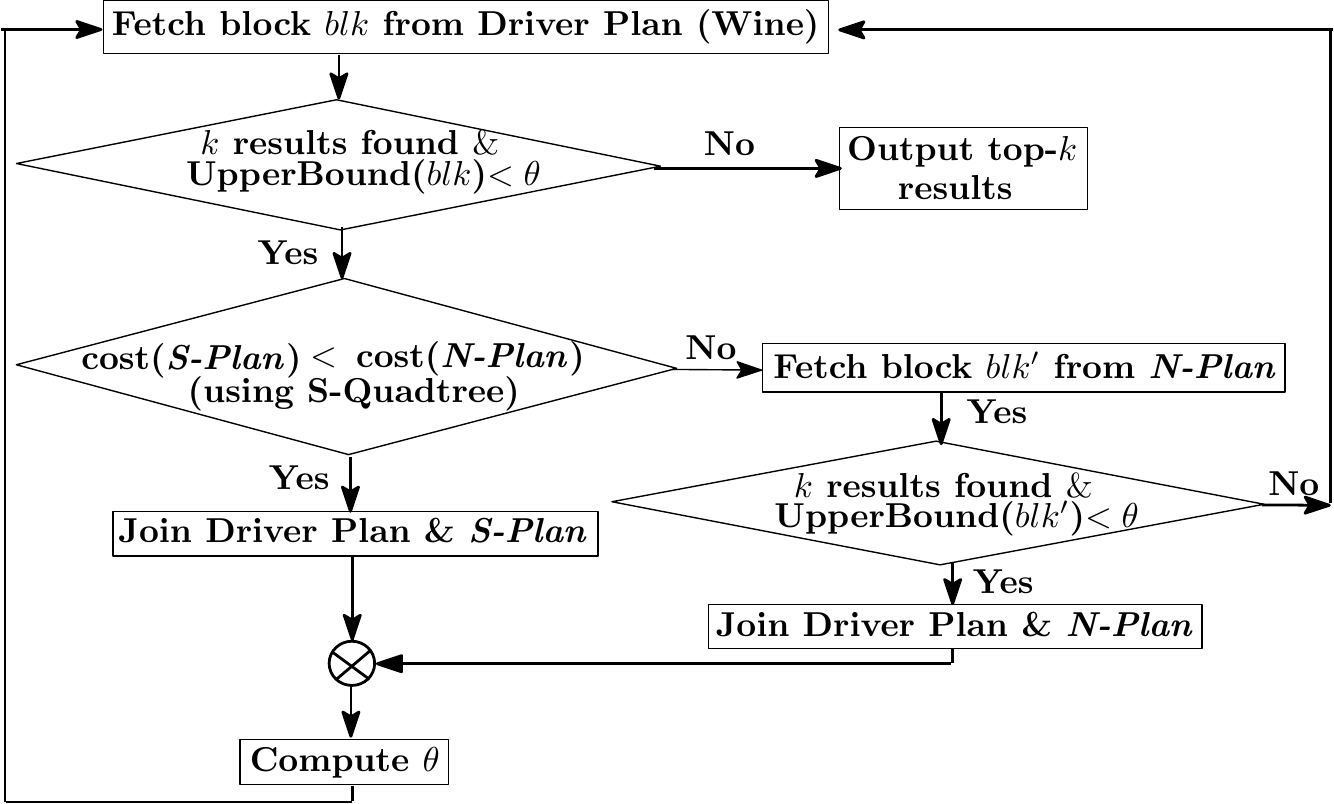}
\caption[\streak: Query Processing Flow-Chart]{{Query Processing Flow-Chart}}
\label{fig:queryProcessing}
\end{center}
\end{figure}

The need for adaptively choosing customized plans for driven sub-query can be
well motivated using our running example, which retrieves the top-most wine
producing regions that are located within $d\ kms$ of a river. This involves a
spatial join between regions containing vineyards (\textit{driver} sub-query)
and rivers (\textit{driven} sub-query). In our example, the wine growing
regions are the Gobi desert in China, which has no river flowing across it, and
Baden in Germany that has its vineyards directly overlooking the Rhine river.
An efficient implementation of top-$k$ spatial distance join operator should
therefore perform spatial join with Gobi desert (in \textit{driver} sub-query) first, as the spatial join will help in eliminating Gobi desert quickly. We call the driven plan generated using
this heuristic of pushing spatial joins deep in query execution, as
\textbf{S-Plan}.  This is shown in Figure~\ref{fig:plans}(B), with driver and
driver sub-query shown in Figure~\ref{fig:plans}(C) and (D), respectively. This
heuristic significantly reduces query execution time when the spatial join
operator is highly selective.

Secondly, the spatial-join operator should check for early-termination first 
when blocks -- from \textit{driver} sub-query -- containing spatial regions like Baden are retrieved, as Baden is
not the top-most wine growing region. Hence pushing numerical predicates deep
in the \textit{driven} sub-query's plan will help in eliminating such tuples. We call such a driven
plan generated using this heuristic as an \textbf{N-Plan}, as shown in
Figure~\ref{fig:plans}(A), with its driver and driven plans shown in
Figure~\ref{fig:plans}(C) and (E) respectively. 

For top-$k$ spatial join workloads both scenarios are highly common. Starting
with either heuristic as seed, larger plans are created by joining optimal
solutions of smaller problems that are adjacent in driven sub-query.

\begin{figure}[bth]
\begin{center}
	\includegraphics[width=\linewidth]{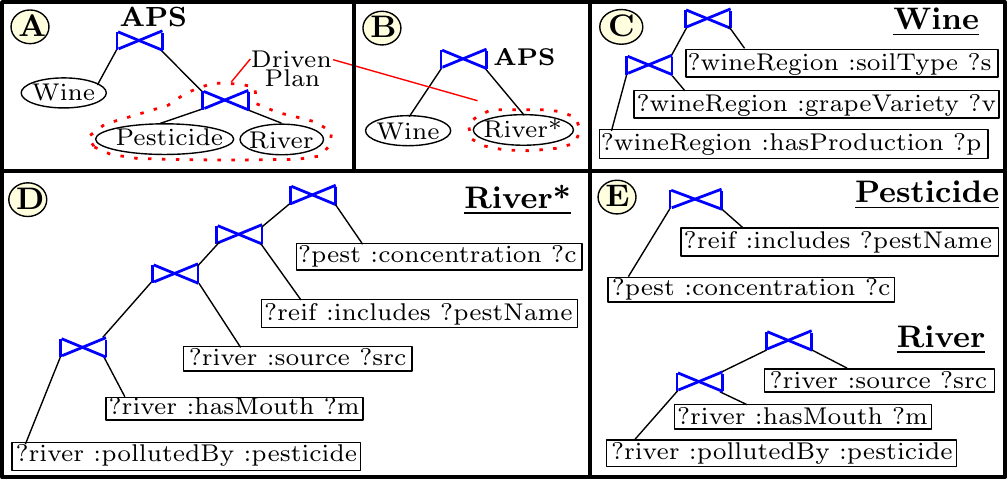}
  \caption[\streak: Possible Plan Choices]{{Possible Plan Choices: (A) N-Plan, with rivers filtered
    first by numerical predicate (pesticide concentration), (B) S-Plan, with a spatial join
    between vineyards and rivers done first, (C) driver plan, (D) Plans supporting
    S-Plan heuristics, and (E) Plans supporting N-Plan heuristics}}
\label{fig:plans}
\end{center}
\end{figure}

\streak then computes the cost of the two plans and routes the tuples through
least cost plan. \streak's block-wise query execution enables it to switch
among the two plans at runtime -- with zero cost of switching plans at
materialization points --- points where next blocks are retrieved and executed
in driver sub-query. At these points \streak performs the spatial join between
the newly retrieved blocks from the driver and driven plan respectively, which
means that the work done by earlier blocks does not go wasted. This helps
\streak overcome the drawbacks of earlier top-$k$ adaptive query processing
approaches that wasted substantial amounts of work at materialization points
\cite{ilyas2004rank} and used limited AQP by switching only indexes at
run-time~\cite{quarkX}. In contrast, \streak switches entire customized plans at
runtime.

\subsubsection{Cost Model} 
We now discuss \streak's cost-model for choosing amongst driven plans.  The
cost function of our spatial join operator takes into account the cardinality
of the spatial join. Greater the cardinality of the spatial join, the greater
we expect the number of candidate results to be and thus higher the query
execution time. We calculate the cardinality of the join to be a product of the
(1) estimated cardinality of \textit{driven} plan $C(R')$ (discussed below) and 
(2) the number of tuples retrieved from the driver plan (as discussed in
sub-section~\ref{subsec:traverseLeft}). 

\streak uses a simple cost model to estimate the cost and cardinalities($C(R')$) of its customized \textit{driven} plans: N-Plan and S-Plan. Recall that the N-Plan pushes numerical predicates deep in the
query plan and uses block-wise evaluation to achieve early-termination, while
the S-Plan pushes spatial join evaluation deep in the query plan and uses
full-scans for evaluating the join. For the rest of the section, we denote the
time of execution of driven plan $R$ as $T(R)$. Let $R$ and $R_{i}$ denote full and block-wise executions of S-Plan
and N-Plan, respectively. $R_{i}$ corresponds to $R$ when either (a) all blocks
are retrieved, or (b) early-termination is achieved. Putting $R$ and $R_{i}$
together, we have the following equation where $x$ is the estimated number of
blocks required to be retrieved before early-terminating is achieved. 

\noindent To estimate time cost,
\begin{align}\label{cost_calculation}
T(R) = \left\{%
    \begin{array}{@{} ll @{}}%
	    x.T(R_{i}) &: \text{$x < R/R_{i}$}\\
	    T(R) &: \text{$x \geq R/R_{i}$}
	\end{array}
    \right.
\end{align}

We estimate $R_{i}$'s result cardinality as $C(R_{i}) = x . C(R)/nb$, where '$nb$' is the number of blocks of numerical predicate, and $C(R)$ can be found from spatial characteristic sets stored in the nodes of the \squad. The intuition of the equation is as follows: when number of blocks is 1, then $C(R_{i})$ is $C(R)$ i.e. it is equivalent to all intermediate results without any early pruning. We assume that all buckets contribute to $C(R)$'s results equally. So when only a portion of them are selected, $C(R_{i})$ is proportional to the fraction of blocks that are selected for early termination, which is characterized by $x/nb$. To estimate `$x$', we find all blocks whose upper bound is greater than the threshold. $C(R')$ is equal to  $C(R_{i})$ for N-Plan and $C(R)$ for S-Plan.

We next route the tuples through the plan with the least cost, and thus at
runtime seamlessly keep alternating between the two customized plans for each
new block of tuples retrieved from the driver sub-query. It may seem at first
glance that the cost of \textit{N-Plan} will be lower than \textit{S-Plan} as
the cardinality of \textit{N-Plan} is always less than equal to the cardinality
of \textit{S-Plan}. However, we found this to not be the case because repeated
scans, owing to block-wise query processing, increases the cost of
\textit{N-Plan}. We later show experimentally in
Section~\ref{sec:experimentalResults}, that our proposed \textit{APS} algorithm
does in fact help select the least cost customized plan, and accelerates query
processing times by one to two order(s) in magnitude.

\section{Evaluation Framework} \label{sec:eval}
\streak is implemented in C++, compiled with g++-4.8 with -O3 optimization flag. 
All experiments were conducted on a Dell R620 server 
with an Intel Xeon E5-2640 processor running at 2.5GHz, with 64GB main-memory, 
and a RAID-5 hard-disk with an effective size of 3TB. For all the experiments, 
the OS can use the remaining memory to buffer disk pages. In our experimental 
evaluation, we report cold-cache timings after clearing all filesystem caches 
and by repeatedly running the query processor with the same query 5 times, and 
taking the average of the final 3 runs. For warm cache numbers, we repeat the 
same procedure but without dropping caches.

We perform a direct comparison of \streak with other full-fledged systems by 
ensuring the use of spatial indexes for the workloads used in the paper. In 
Virtuoso, we employ a two dimensional R-tree implementation for indexing the 
spatial data and on postgres we have built the ``gist'' (generalized search 
tree) index for supporting spatial data. Postgres supports varied geometry types 
like POINTS, POLYGONS and LINESTRINGS, while Virtuoso only supports POINT 
geometries~\cite{encodingSpatialRDF}. The number of results returned by Postgres 
and Virtuoso is constrained by using LIMIT `k' in the query. 

\subsection{Datasets}
We used two widely used, real-world datasets: YAGO3 Core~\cite{cidr_yago3} and 
Linked Geo Data(LGD)~\cite{auer2009linkedgeodata} for evaluating the performance 
of \streak in comparison with other frameworks. LGD contains freely available 
collaboratively collected data from Open Street Map project. YAGO3 contains 
facts extracted from Wikipedia, GeoNames and WordNet. YAGO3 attaches annotations such as confidence score, time, spatial location, provenance, etc. with the fact identifier of each fact using Turtle format. Table~\ref{tab:data} shows 
the characteristics of the two datasets. We observe that YAGO3 and LGD datasets contain 85 million and 324 million 
quads respectively. The size of the quad-tree used for storing YAGO3 and LGD 
datasets, is 37 MB and 119 MB respectively, about 0.04\% and 2\% of size of raw data. This is despite the fact that 50 percent of facts in LGD contain description about spatial objects, consisting of rich suite of geometries such as POINTS, POLYGONS and LINESTRINGS. To simulate situation where confidence score is attached to each fact, we assign confidence values between 0 and 1.0 using exponential and uniform distributions to all facts in the original datasets. We find that results with uniform distribution follow a similar trend, so for sake of brevity we report the results only for exponential distribution. 

\begin{table}
               \caption{Dataset Characteristics}
               \label{tab:data}
               \centering
               \begin{tabular}{l l l l l }
               \toprule
               Dataset & quads & points & lineStrings & polygons\\
               \midrule
               YAGO3 & $85,928,328$ & $590,000$ & 0 & 0\\
               LGD & $30,857,332$ & $4,000,000$ & $2,600,000$ & $264,000$\\
               \bottomrule
               \end{tabular}
\end{table}

\begin{table*}
\caption{Benchmark Query Characteristics (Join Type: (1) SS: Subject-Subject, (2) RS: Reification-Subject, (3) OS: Object-Subject). }
\label{fig:querychar}
\centering
\begin{subtable}{1\textwidth}
\scalebox{1.0}{
	\begin{tabular}{p{0.8cm} p{1cm} c c p{1.2cm} p{1cm} p{1 cm} p{1.3 cm} p{1cm} p{1.5 cm} p{1.3cm}}
	\toprule
	\textbf{Query id} & \textbf{Shape} & \textbf{Type}	 & \textbf{\# TP} & \textbf{\# Quant
 TP} & \textbf{\# Non Quant TP} & \textbf{\# Joins} & \textbf{Join Type} & \textbf{Join degree} & \textbf{Result Card.} & \textbf{Selectivity}
\\
	\midrule
	1 & Complex & Point Polygon &  6 & 2 & 4 & 4 & (SS, RS) & (2,2,2,2) & $2072177$ & 1.5E-29\\		
	2 & Complex & Point Point  &  6 & 2 & 4 & 4 & (SS, RS) & (2,2,2,2) & $131525$ & 1.1E-29\\		
	3 & Complex & Point Point  &  7 & 2 & 5 & 6 & (SS, RS) & (3,3,2,2) & $66477$ & 5.1E-42\\		
	4 & Complex & Point Point  &  9 & 2 & 7 & 7 & (SS, RS) & (4,3,2,2) & $613142$ & 1.9E-47\\		
	5 & Complex & Point Polygon  &  9 & 2 & 7 & 7 & (SS, RS) & (4,3,2,2) & $362179$ & 2.1E-48\\		
	6 & Complex & Point LineString  &  6 & 2 & 4 & 4 & (SS, RS) & (2,2,2,2) & $830000000$\footnote{Note that, for long running queries (running for greater than 5 hours) we have high water marked the systems with result cardinality of 830000000 and stopped result generation after these many were generated.\label{note1}} & 8.57E-30\\		
	7 & Complex & LineString Point  &  6 & 2 & 4 & 4 & (SS, RS) & (2,2,2,2) & $830000000$\footref{note1} & 3.2E-29\\		
	8 & Complex & Polygon LineString  &  7 & 2 & 5 & 5 & (SS, RS) & (3,2,2,2) & $830000000$\footref{note1} & 2.4E-36\\		
	\bottomrule
	\end{tabular}
	}
	\caption{LGD}\label{tab:sub_first}
	\end{subtable}	

\begin{subtable}{1\textwidth}
	\scalebox{1.0}{	
	\begin{tabular}{p{0.7cm} c c c p{1.2cm} p{1.0cm} p{1.2 cm} p{1.5cm} p{1.4 cm} p{1.8 cm} p{1.2cm}}
	\toprule
	\textbf{Query id} & \textbf{Shape} & \textbf{Type}	& \textbf{\# TP} & \textbf{\# Quant
 TP} & \textbf{\# Non Quant TP} & \textbf{\# Joins} & \textbf{Join Type} & \textbf{Join degree} & \textbf{Result Card.} & \textbf{Selectivity}
\\
	\midrule
	1 & Star & Point Point & 6 & 2 & 4 & 6 & (SS) & (3,3) & $8836947$ & 8.1E-28\\		
	2 & Star & Point Point &  8 & 3 & 5 & 7 & (SS) & (4,3) & $6247$ & 3.2E-33\\	
	3 & Star & Point Point  & 7 & 2 & 5 & 7 & (SS) & (4,3) & $557$ & 2.9E-34\\		
	4 & Star & Point Point &  8 & 3 & 5 & 8 & (SS) & (5,3) & $5604$ & 5.1E-36\\		
	5 & Complex & Point Point &  8 & 2 & 6 & 6 & (OS, RS) & (2,2,2,2,2,2) & $829557455$ & 1.4E-41\\		
	6 & Complex & Point Point &  7 & 2 & 5 & 6 & (OS, SS, RS) & (3,3,2) & $172238$ & 2.8E-33\\		
	7 & Complex & Point Point &  6 & 2 & 4 & 6 & (SS, RS) & (3,2,2) & $13401$ & 7.9E-30\\		
	8 & Complex & Point Point &  7 & 3 & 4 & 5 & (OS, RS, SS) & (2,2,2,2,2) & $1135925$ & 8.4E-38\\		
	\bottomrule
	\end{tabular}
	}
	\caption{YAGO3}\label{tab:sub_first}
\end{subtable}	
\end{table*}
\subsection{Benchmark Query Workloads}
Numerous performance benchmarks have been created in the recent years for 
processing RDF / SPARQL queries. However, none of them cater to queries 
involving reification (or named graphs), spatial and top-$k$ evaluation. 
Therefore, we design a set of benchmark queries containing query features that 
have been found important in literature ~\cite{diversifiedStressTesting, 
encodingSpatialRDF,quarkX}, namely --- \textbf{structural} features, which 
include:
\begin{itemize}
\item \textbf{Shape: } This feature captures the two ways in which triples patterns 	in a SPARQL query can be combined namely, \textit{star} and \textit{complex}. 
Queries that belong to the \textit{star} class have -- for each of \textit{driver} and \textit{driven} sub-queries -- several triple patterns connected with a common subject variable. Indeed in \textit{star} class, queries in the query graphs  -- of each of \textit{driver} and \textit{driven} sub-queries -- are cliques. We classify the remaining queries, which do not belong to the \textit{star} class as \textit{complex} shaped queries. 

Though this is a high level structural classification, we found it to be sufficiently useful to test the robustness of an RDF system. This is because although RDF engines can be accelerated for \textit{star} class queries with efficient merge joins, the same is not necessarily true for \textit{complex} class queries.

\item \textbf{Type: } This feature highlights the efficiency of RDF systems to handle spatial objects with different spatial extents e.g. polygon, line string, point, etc.

\item \textbf{Triple pattern count(\# TP): } denotes to number of triple patterns 		in a query. This feature tests how an RDF system scales with increasing number 		of triple patterns.

\item \textbf{Quantifiable triple pattern count (\# Quant TP): } denotes number of 		triple patterns in a query whose objects are arguments to the ranking function. 		This feature tests how an RDF system scales with increasing complexity of -- more precisely, the number of attributes in -- the ranking function. 

\item \textbf{Non-quantifiable triple pattern count(\# Non Quant TP): } denotes the 	triple patterns in a query whose objects are \textit{not} arguments to the ranking function. This feature tests the robustness of a top-$k$ RDF system for 				processing non-quantifiable query patterns along with quantifiable query 				patterns.

\item \textbf{Count of joins(\# Joins): } This feature highlights the number of variables which are either 
\begin{inparaenum} [(1)]
	\item subject($S$), 
	\item object($O$), or
	\item fact identifiers($R$) 
\end{inparaenum}
of multiple triple patterns in a SPARQL query.

\item \textbf{Join type: } The indexing schemes used by different RDF engines can lead to difference in performance on different join types. Therefore, our benchmark tests the systems on different join types viz., Subject-Subject(SS), Reification(aka Fact Identifier)-Subject(RS) and Object-Subject(OS).

\item \textbf{Join degree: } This feature tests the RDF systems on degree of joins -- i.e., number of triple patterns with the same variable. For exposition, a join degree of (2,2,2,2) signifies a query with four joins, each with two triple patterns connected to the join variable.
\end{itemize}

Additionally our benchmark contains important \textbf{statistical} features, which includes:
\begin{itemize}
\item \textbf{Result cardinalities: } This feature tests the efficiency of RDF systems based on the number of results generated by a SPARQL query, \textit{without} the LIMIT clause. 

\item \textbf{Selectivities of queries: } This feature highlights the performance of RDF engines based on the number of results pruned by a SPARQL query, without the LIMIT clause. Selectivity is defined as the ratio of number of tuples appearing in the result to the total number of tuples in the Cartesian product of facts satisfying the triple patterns in the query.
\end{itemize}

Table~\ref{fig:querychar} shows the chief features of the 8 benchmark queries 
which we have created for both LGD and YAGO3 datasets. While YAGO3 consists of 
only point objects, LGD consists of different types of spatial objects and 
hence, the benchmark for LGD is more diverse comprising spatial joins of 
different object types. For both data sets, it is noteworthy that these queries 
provide a comprehensive coverage of all key features. For simplicity, we 
present results for linear ranking function, although our framework is aimed at 
convex monotonic functions.

\section{Experimental Results} \label{sec:experimentalResults}
In this section, we present a comprehensive experimental evaluation of \streak 
along the following dimensions: 
\begin{inparaenum} [(A)]
	\item the performance of spatial filter processing. Within this, we study (i) the impact of sideways
information passing and (ii) the performance of using the \squad based join over synchronous R-tree traversal based join.
	\item end-to-end holistic performance combining all features of \streak 
	\item the advantages of early-out top-$k$ processing in \streak.
\end{inparaenum}
Unless stated otherwise the results we present in this section correspond to a default setting of $k = 100$. 

\subsection{Performance of Spatial Join Processing in \streak}
\subsubsection{Impact of Sideways Information Passing and Optimal Node-Selection Algorithm} 
\label{subsec:DepthEst}

The node selection algorithm determines the optimal set of nodes of the \squad
 in order to determine the \textit{I-Range} and \textit{E-list} 
objects, which subsequently helps filter and narrow down the \textit{driven} 
sub-query in sideways information passing style. Note that we use \squad by keeping it fully memory-resident (although it is serialized to disk). Therefore we study the impact of node selection algorithm only under warm-cache setting. Figure~\ref{fig:sip_depthEstimation} shows the performance with and without the use of sideways information passing (SIP), powered by the node selection algorithm over \squad. Specifically, these plots show the time, in milliseconds plotted along Y-axis in log-scale, taken for completing all the benchmark queries over YAGO3 and LGD datasets (X-axis). 

Looking at individual queries in Fig.~\ref{fig:sip_depthEstimation}, we 
observe considerable improvement in query execution performance for all queries over YAGO3 dataset 
as well as $Q_{6}$ \& $Q_{8}$ of LGD when SIP and node selection algorithms are enabled. This is due to the fact that these queries are highly selective at the spatial join (with ranking), making the use \textit{I-Range} and \textit{E-list} objects for filtering the intermediate results from the driven plan highly effective. It can reduce the query processing time by upto \emph{3 orders of magnitude}. On the other hand, queries $Q_{1}$ to $Q_{5}$ in LGD have low selectivity at the spatial join operator (with ranking), with very little impact of SIP in skipping irrelevant entries from the driven plan. 

\begin{figure}[tb]
	\begin{subfigure}[c]{0.9\columnwidth}
		\includegraphics[width=\textwidth]{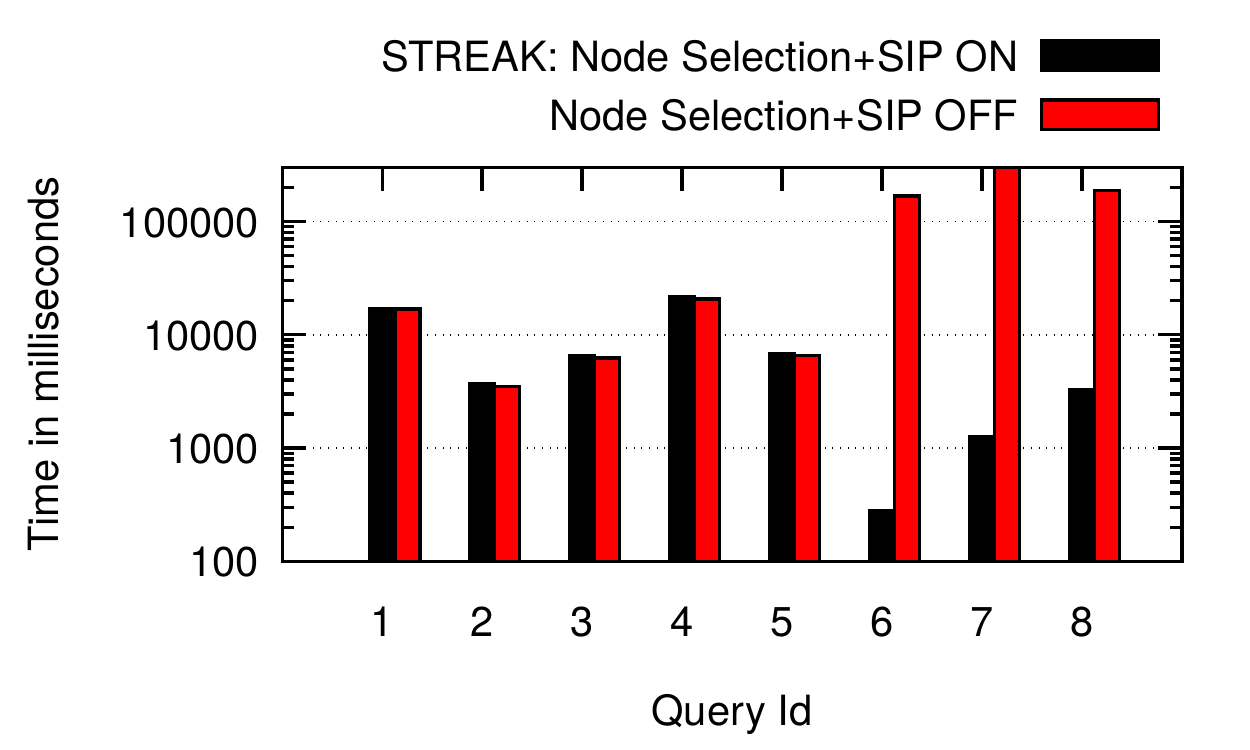}
						\caption*{\qquad LGD}			
	\end{subfigure}
	\begin{subfigure}[c]{0.9\columnwidth}
				\includegraphics[width=\textwidth]{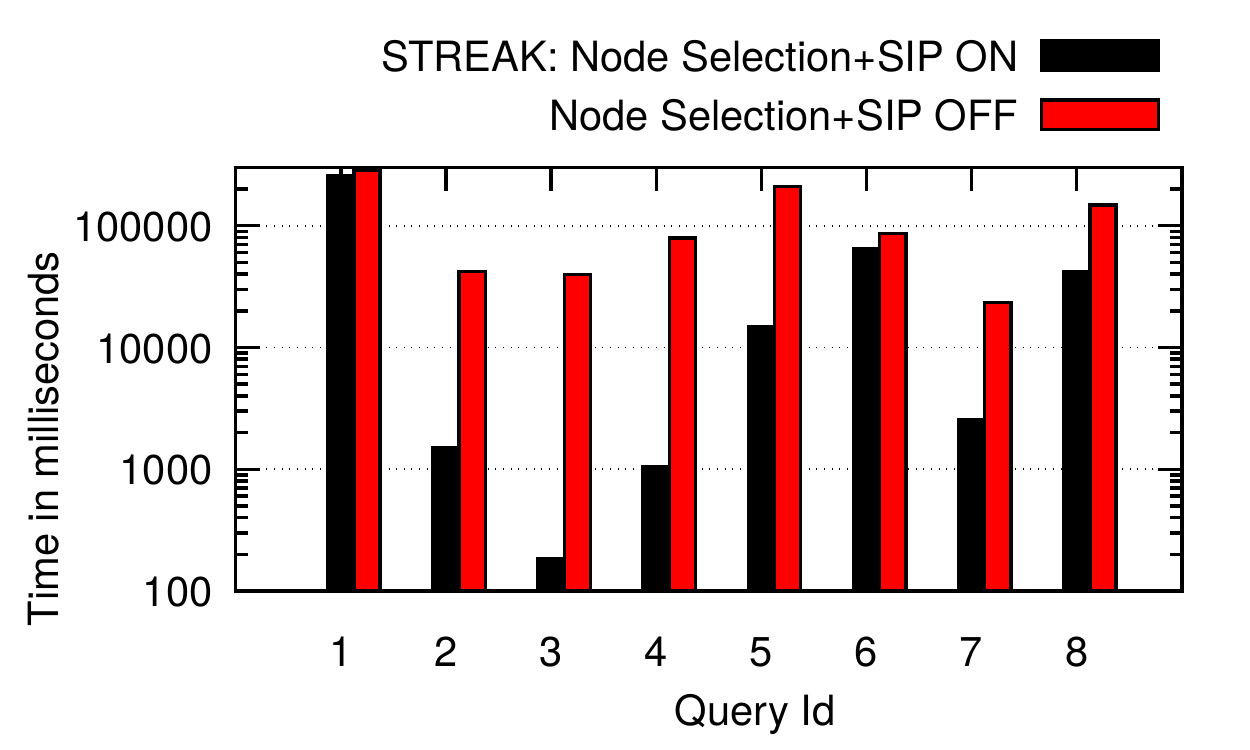}
			\caption*{\qquad YAGO3}
	\end{subfigure}
	\caption{Effect of Sideways Information Passing}
    \label{fig:sip_depthEstimation}
\end{figure}

\begin{figure}[tb]
	\begin{subfigure}[c]{0.9\columnwidth}
			\includegraphics[width=\textwidth]{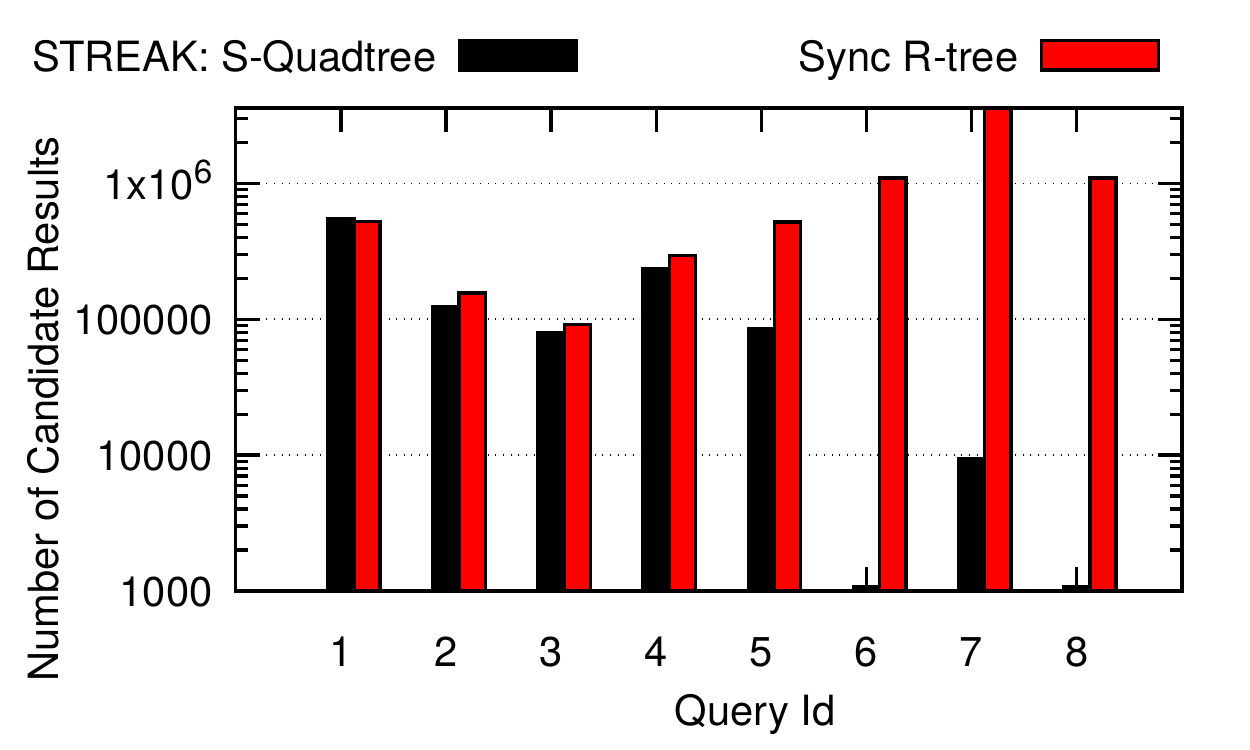}
			\caption*{\qquad LGD}		
	\end{subfigure}
	\begin{subfigure}[c]{0.9\columnwidth}
			\includegraphics[width=\textwidth]{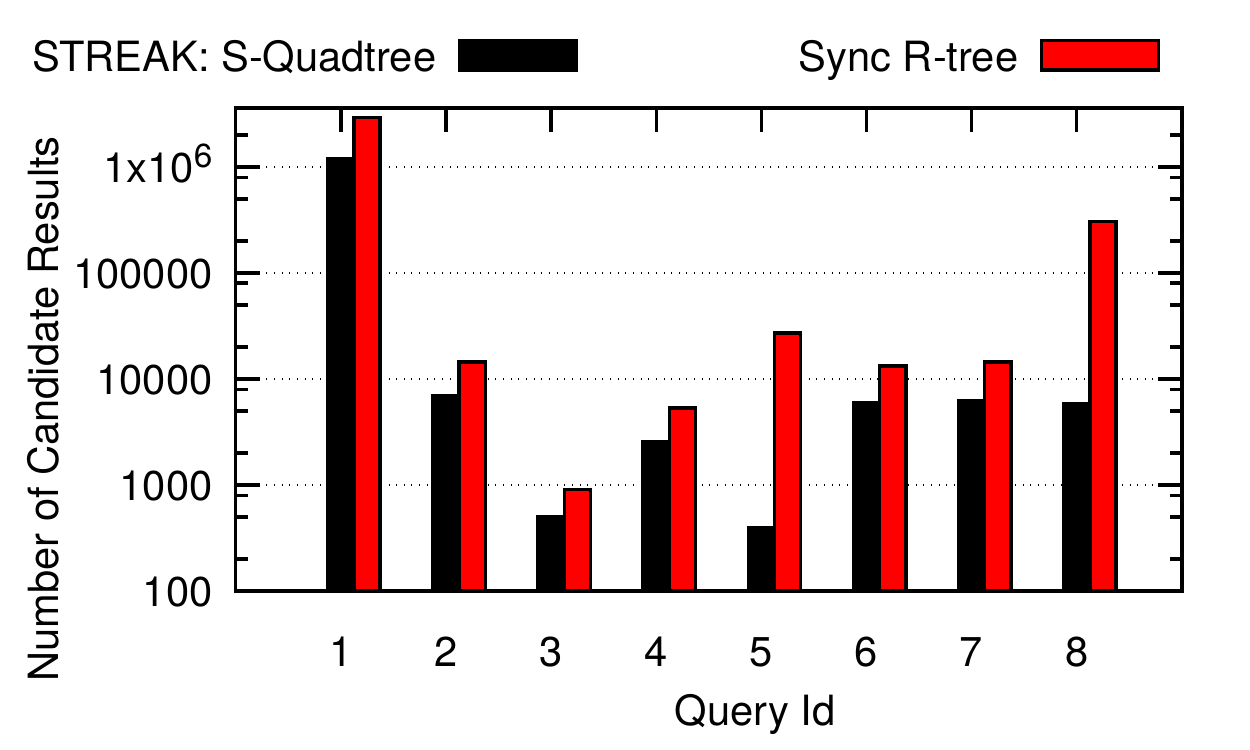}
			\caption*{\qquad YAGO3}
	\end{subfigure}
		\caption{\squad Vs. Sync. R-tree for Spatial Join}
		\label{fig:syncRTree_Squad}
\end{figure}

\subsection{Effectiveness of Adaptive Spatial Filter Processing} 
\label{subsec:AQP}
The goal of adaptively processing the spatial join queries using our APS algorithm is to select from either \textit{N-Plan} or \textit{S-Plan} alternatives as each block is processed. Recall that \textit{N-Plan} pushes down the numerical predicate evaluation, while \textit{S-Plan} pushes down the spatial joins. 
The first observation we can make from Fig~\ref{fig:aqpPerformance} is that 
\streak adaptively prefers \textit{S-Plan} for $Q_{2}$, $Q_{3}$, $Q_{4}$, $Q_{5}$  of LGD, and $Q_{2}$, $Q_{3}$, $Q_{4}$, $Q_{7}$ of YAGO3. Observe that the 
\textit{S-Plan} outperforms the other customized plan, 
\textit{N-Plan}, by more than an order of magnitude. This 
performance difference is due to the fact that for these queries almost all blocks from the right plan have to be fetched. 
\textit{S-Plan} is further accelerated in \streak by appropriately skipping irrelevant tuples in a sideways information passing style. The spatial objects 
present in \textit{I-Range} and \textit{E-list} of SQuad-tree's nodes, help in 
appropriately skipping irrelevant tuples.

\streak's customized \textit{N-Plan} helps it outperform the other 
customized \textit{S-Plan} for queries $Q_{6}$, $Q_{7}$, $Q_{8}$ of 
LGD, and $Q_{5}$, $Q_{8}$ of YAGO3. The performance gains is because for these queries only a small number of blocks from the right side plans have to be fetched, and hence the overhead of retrieving and uncompressing all blocks is greater in comparison to retrieving and uncompressing just a few blocks.

We may also observe that the time taken by \streak's AQP for $Q_{6}$ of YAGO3 is 
near the ballpark of \textit{S-Plan}, and is greater than 
\textit{N-Plan}. This is because for Query $Q_{6}$, both the 
\textit{S-Plan} and the \textit{N-Plan} fetch the same 
number of tuples. Now, \textit{S-Plan} does not have the 
additional overhead of random accesses. Hence, the running time of 
\textit{S-Plan} is correctly estimated by \streak to be lower, and 
thus it chooses it for query execution. But unfortunately because of underlying 
framework's (Quark-X) implementation, \textit{S-Plan} has to do many 
unsorted numerical predicate accesses, therefore \textit{S-Plan} is 
much slower than \textit{N-Plan}. If an extra index containing 
mapping from numerical identifier to its numerical value is added then we expect 
\textit{S-Plan} to outperform \textit{N-Plan}. 

\begin{figure}[tb]
	\begin{subfigure}[c]{0.9\columnwidth}
			\includegraphics[width=\textwidth]{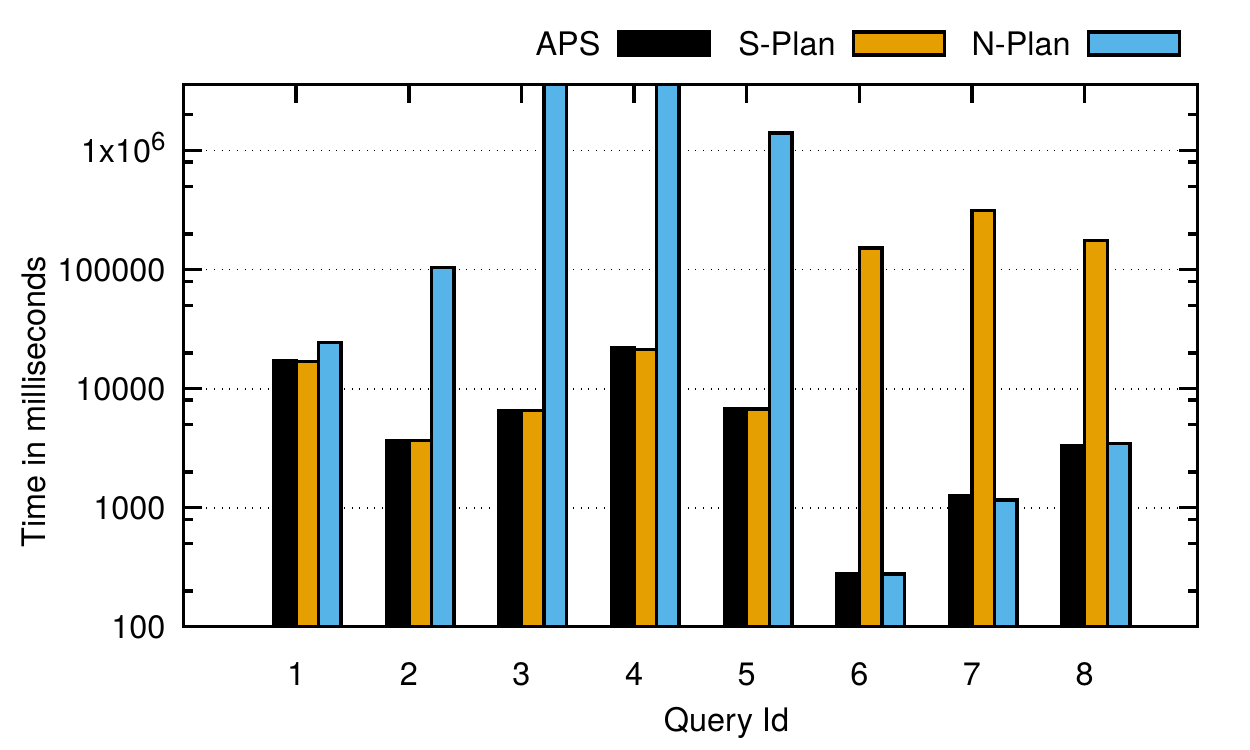}
			\caption*{\qquad LGD}		
	\end{subfigure}
	\begin{subfigure}[c]{0.9\columnwidth}
			\includegraphics[width=\textwidth]{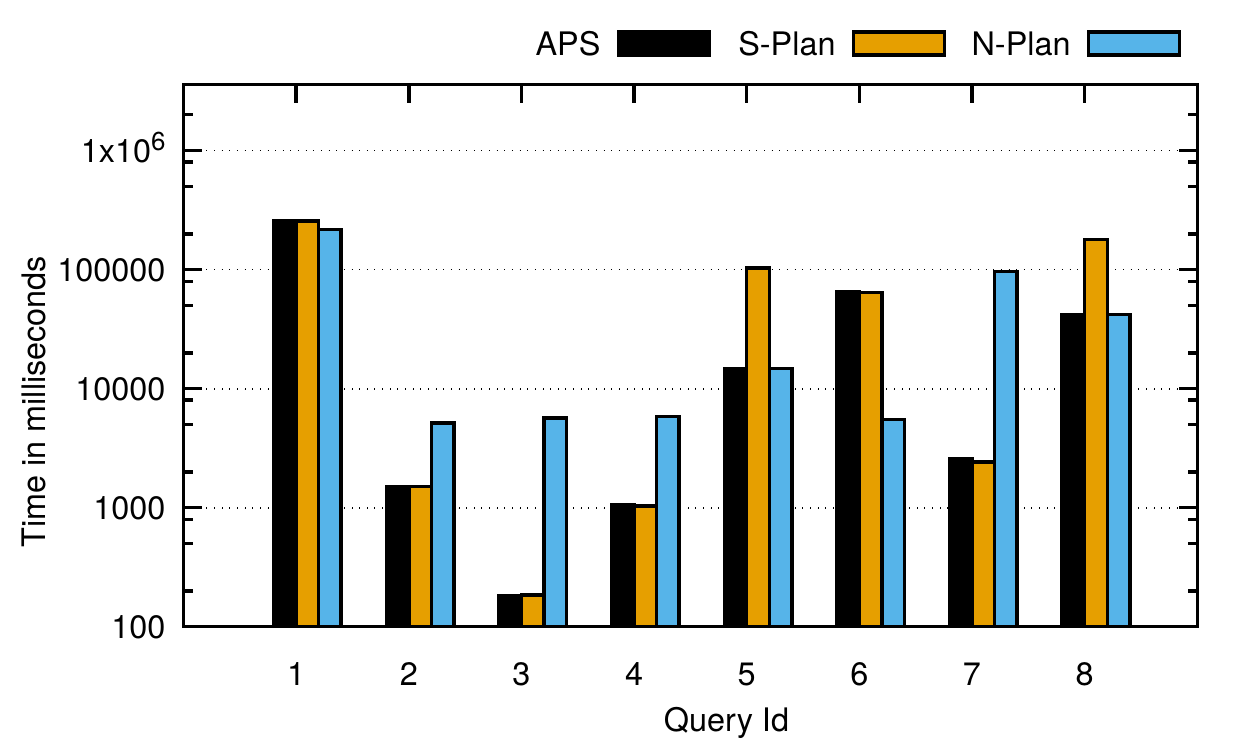}
			\caption*{\qquad YAGO3}
	\end{subfigure}
		\caption{Performace AQP versus N-Plan and S-Plan}
		\label{fig:aqpPerformance}
\end{figure}

\subsubsection{Comparison of Spatial Join Algorithms} 
\label{subsec:spatialJoinAlgorithms}
In the next set of experiments, we compare the performance of spatial join processing using \squad as opposed to using synchronous R-tree traversal based spatial join algorithm. Towards this, we incorporated the state of the art implementation of synchronous R-tree traversal join algorithm released by Sowell~et~al.~\cite{sowell2013experimental} into the \streak framework, and provided a run-time switch that can choose synchronous R-tree in place of \squad.

Figure~\ref{fig:syncRTree_Squad} plots the number of candidate results generated by the two algorithms (along Y-axis in log-scale) for all the benchmark queries. From these results it is immediately apparent that \squad generates smaller number of candidates, sometimes \emph{2 orders of magnitude} less --e.g., as in LGD $Q_6$ and YAGO3 $Q_5$-- than sync. R-tree based method. This can be attributed to the following advantages of \squad:
\begin{asparaitem}
	\item The use of \emph{characteristic set} information within the nodes of the \squad helps to reduce the number intermediate results significantly in comparison to standard spatial indexes such as R-trees.
	\item Encoded identifiers are able to achieve a limited granularity owing to the fixed size of identifiers. These identifiers when decoded back suffer from a loss of precision, which in-turn affects the filtering efficiency of spatial join algorithms such as Synchronous R-tree traversal. However, \streak owing to its use of \textit{I-Range} and \textit{E-list} objects is able to overcome this situation and is thus able to filter efficiently. 
	\item While both spatial join methods lack a well defined way of filtering the results in the driven plan as it is, the use of sideways information passing with \squad significantly reduces the cardinality of intermediate results. 
\end{asparaitem}
	 
\subsection{Comparison with Database Engines} 
\label{subsec:cmpDBEngines}
We now turn our attention to the comparison of end-to-end system performance of \streak with two state of the art publicly available database engines. Specifically, we compare with (i) \emph{PostgreSQL}~\cite{postgres}: a relational database system that has an excellent in-built support for spatial indexing via its generalized search tree (GiST) framework;  and (ii) \emph{Virtuoso}~\cite{virtuoso}: a state of the art commercial RDF processing engine that has a number of performance optimizations such as vectorization as well as cache-conscious processing specifically useful for RDF/SPARQL processing. Note that our current implementation of \streak has neither the vectorized processing nor the cache-conscious features, although these are part of our future work plans. 

Before we discuss the query processing performance, we briefly present the effective size of databases created by the three engines summarized in Table~\ref{tab:loadingtime}. As we already mentioned in Section~\ref{sec:eval}, Virtuoso could not be used for LGD dataset since it lacks support for varied geometry types such POINTS, POLYGONS and LINESTRINGS. At 11 GB of database size,  Virtuoso, which employs highly compressed encoding, is the most compact database for YAGO3. While \streak also has a comparable database size (14 GB), PostgreSQL on the other hand has a significantly larger database size --almost \textbf{5$\times$} larger. Similar observations also hold for LGD between \streak and PostgreSQL database sizes.

\begin{table}[!tb]
\caption{On-Disk Database Size for YAGO3 and LGD}
\label{tab:loadingtime}
\centering
\begin{tabular}{l  c  c}
\toprule
\textbf{Database Engines} & \textbf{YAGO3} & \textbf{LGD} \\
& Size (GB) & Size (GB) \\
\midrule
\textbf{\streak} & 14 & \textbf{6.4} \\
\textbf{PostgreSQL} & 54  & 22 \\
\textbf{Virtuoso} & \textbf{11} & ---NA---\\
\bottomrule
\end{tabular}
\end{table}

We summarize the comparative query processing performance of the three engines in Figure~\ref{fig:postgres_virtuoso}. Due to lack of space, we only show and discuss the warm-cache performance numbers of the three database engines. Their performance in cold-cache setting has similar behavior and is shown in Figure~\ref{fig:postgres_virtuosoCold}. The Y-axis in these plots corresponds to the end-to-end wall-clock time in milliseconds for each query, and is plotted in log-scale. 

From these results, we can observe that \streak outperforms both PostgreSQL and Virtuoso on all queries, by a significant margin. For a number of queries --\textit{viz.}, $Q_6 - Q_8$ of LGD and $Q_5, Q_7, Q_8$ of YAGO3 -- PostgreSQL could not complete within its allotted time. Virtuoso failed to complete for $Q_5$ of YAGO3 as well. The poor performance of PostgreSQL can be attributed to its preference to choose nested loop joins, due to its relatively weak cost models for RDF. For Virtuoso, however, we cannot comment on its poor performance as it is closed source and no published work describes its functionality and query optimization techniques.

 \begin{figure}
 	\centering
	\begin{subfigure}[b]{0.9\columnwidth}
		\includegraphics[width=\columnwidth]{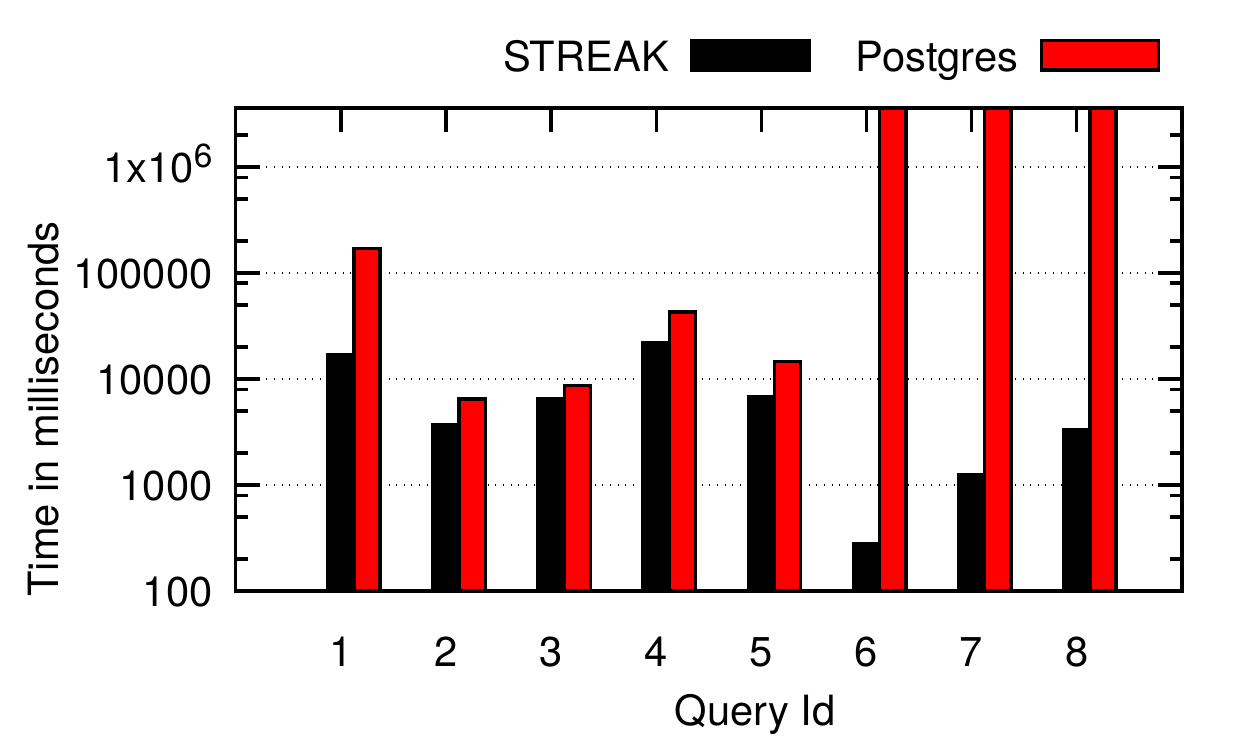}
	\caption*{LGD}
	\end{subfigure}
	\begin{subfigure}[b]{0.9\columnwidth}	\includegraphics[width=\columnwidth]{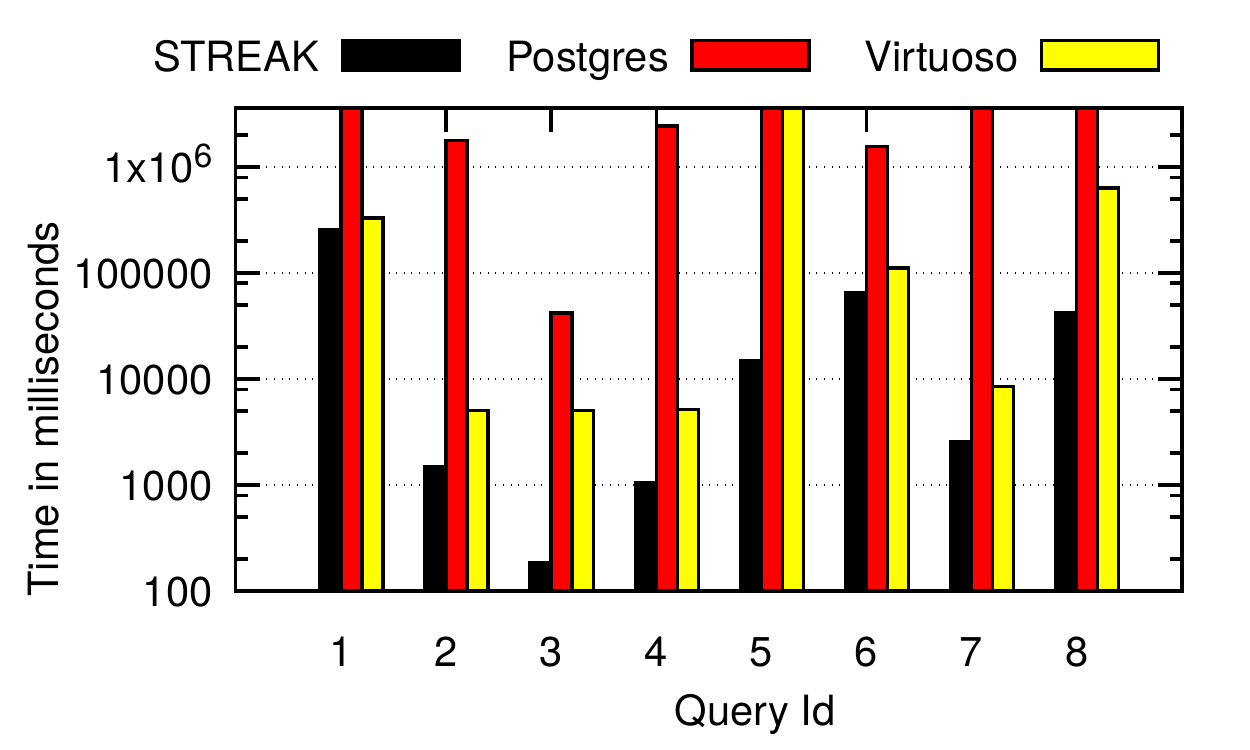}
	\caption*{YAGO3}
	\end{subfigure}
	\caption{Performance of \streak Vs. PostgreSQL and Virtuoso \textit{in Warm Cache}}
	\label{fig:postgres_virtuoso}
\end{figure}

 \begin{figure}
 	\centering
	\begin{subfigure}[b]{0.9\columnwidth}
		\includegraphics[width=\columnwidth]{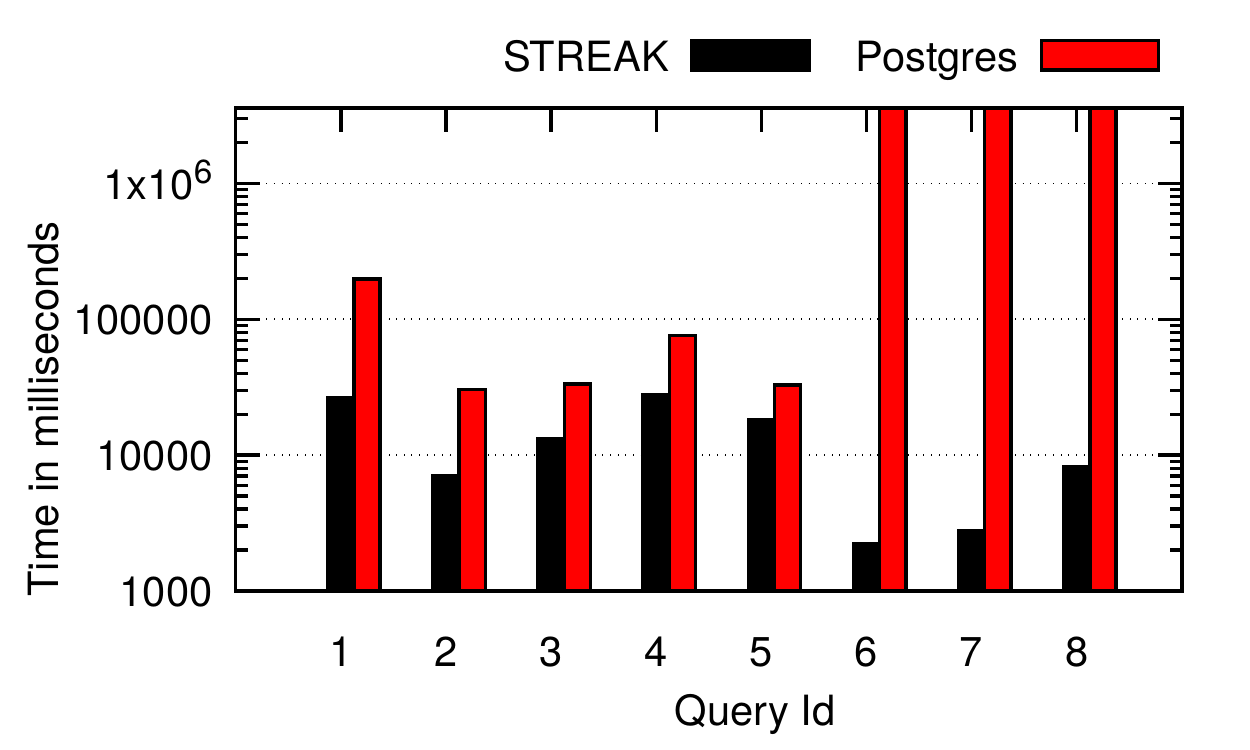}
	\caption*{LGD}
	\end{subfigure}
	\begin{subfigure}[b]{0.9\columnwidth}	\includegraphics[width=\columnwidth]{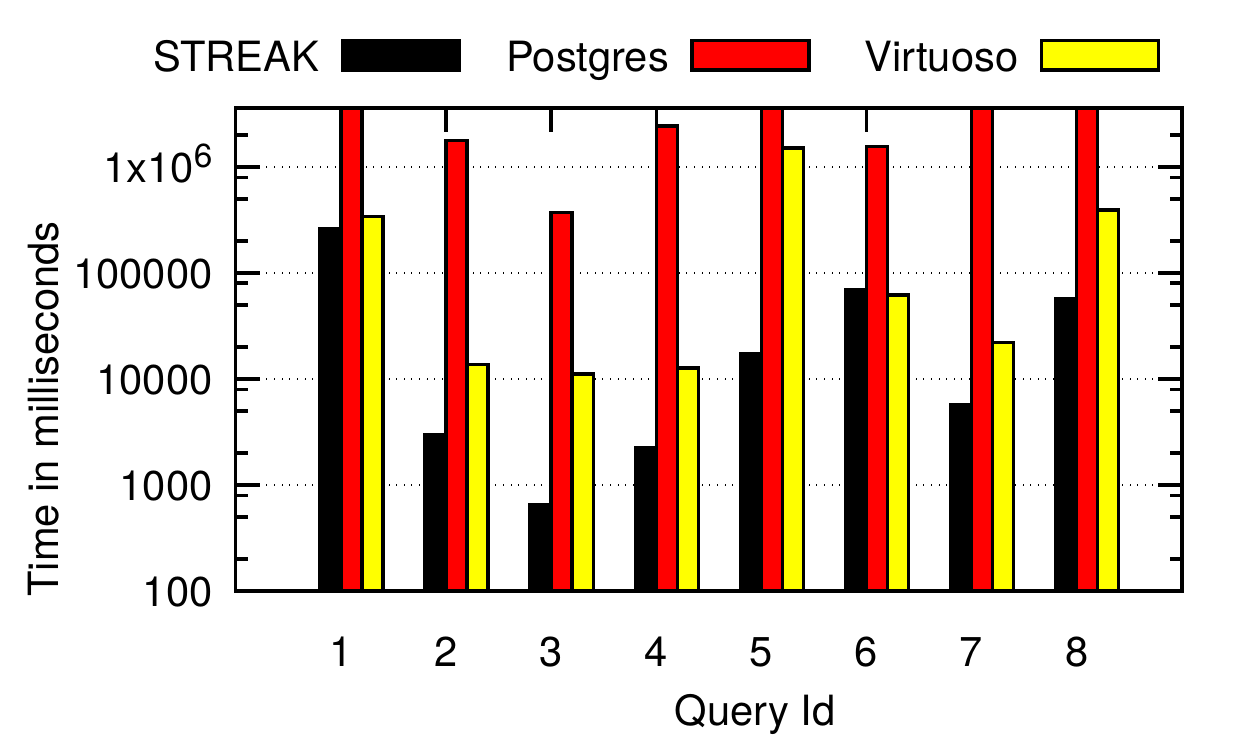}
	\caption*{YAGO3}
	\end{subfigure}
	\caption{Performance of \streak Vs. PostgreSQL and Virtuoso \textit{in Cold Cache}}
	\label{fig:postgres_virtuosoCold}
\end{figure}

\subsection{Comparison with varying $k$} 
\label{subsec:withvaryingK}
Finally, we briefly present the performance evaluation when $k$ value is varied. Figure~\ref{fig:varyk} plots the geometric mean of runtime of all queries in the benchmark for $k = \left\{ 1, 10, 50, 100\right\}$ in warm-cache scenario for LGD. Results of cold-cache setting and YAGO3 benchmark could not be presented due to lack of space, but show similar behavior. One can observe that \streak outperforms PostgreSQL by a large margin for all values $k$. PostgreSQL is unaffected by the variation in the value of $k$ since it simply performs the full evaluation followed by sort in all cases. 

It is noteable to see the performance of our APS algorithm in comparison to fixed plans namely, \textit{N-Plan} and \textit{S-Plan}. For $k$ = 1, 10 the \textit{N-Plan} outperforms the \textit{S-Plan}; while for $k$ = 50, 100 the \textit{S-Plan} outperforms the \textit{N-Plan}. This is because of the need for scanning fewer blocks for lower values of $k$, thus making \textit{N-Plan} a cheap option. While many blocks need to be scanned for larger values of $k$, thus making \textit{S-Plan} a better option. It is noteworthy that \streak shows outstanding performance in comparison to both \textit{S-Plan} and \textit{N-Plan}, owing to better plan selection and switching at run-time.
\begin{figure}
	\centering
	  \includegraphics[width=0.9\columnwidth]{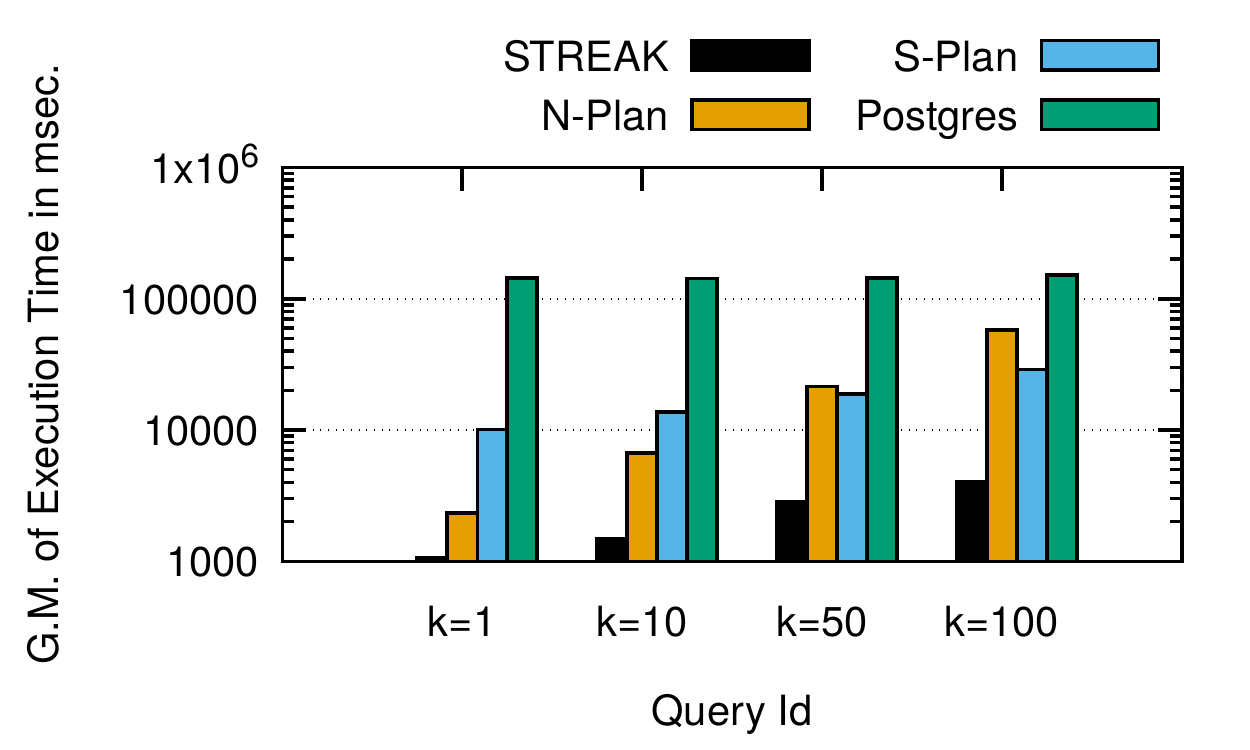}
	  \caption*{LGD on Warm Cache}
	  \caption{Performance with Varying $k$}
      \label{fig:varyk}
\end{figure}

\section{Related Work} \label{sec:relwork}
In this section, we provide a full overview of techniques that overlap with the
core aspects of \streak. These primarily relate to RDF spatial joins, top-$k$ joins, spatial top-$k$ joins and adaptive query
processing.

\noindent \textbf{Spatial Joins}\\
We now look at the work done specifically on spatial joins in both relational 
and RDF settings.

Parliament~\cite{parliament2012} is a spatial RDF engine built on top of
Jena~\cite{jenaTDB}. Parliament supports GeoSPARQL features and uses R-tree as the
spatial index. Additional techniques to enhance spatial queries, as adopted by
Geo-store~\cite{geoStore} and in RDF engine proposed by Liagouris et
al.~\cite{encodingSpatialRDF} involve encoding spatial data by employing
Hilbert space-filling curves for preserving spatial locality. 

Another framework, S-Store~\cite{sStore}, extends state-of-the-art RDF store
gStore~\cite{zou2011gstore}, by indexing data based on their structure.
Subsequently, queries are accelerated by pruning based on both spatial and
semantic constraints in the query. Yet another spatiotemporal storage
framework is g$^{st}$-Store~\cite{wang2014gst}, which is an extension of
gStore~\cite{zou2011gstore}. It processes spatial queries using its tree-style
index ST-tree, with a top-down search algorithm. 

However, the above frameworks, while capable of handling spatial filters in the
query, are not directly amenable for either top-$k$ processing or adaptive
querying. They miss out features to early-prune intermediate results and to
process unsorted orders on spatial attributes (imposed by top-$k$ query
processing), which are critical for optimizing the workloads seen here.
\streak in stark contrast provides efficient ways of supporting top-$k$ spatial
joins using its customized \squad and query processing algorithms. This
helps \streak outperform state-of-the-art systems like Virtuoso by a large
margin as shown experimentally in sub-section~\ref{subsec:cmpDBEngines}.

Relational approaches such as Synchronous R-tree traversal~\cite{syncRTree} and
TOUCH~\cite{nobari2013touch} apply spatial-join algorithm on hierarchical tree-like data structure. TOUCH is an in-memory based technique, which uses R-tree traversal for evaluating spatial
joins. \streak differs from TOUCH in using identifier encoding and
semantic information embedded within the datasets for early-pruning,  \jyoticomment{ and for improving performance.}and by
utilizing statistics stored within the spatial index to choose the query plan
at runtime for improving performance. Sync. R-tree traversal builds
R-tree indexes over the datasets participating in the spatial join.
Starting from the root of the trees, the two datasets are then synchronously
traversed to check for intersections, with the join happening at the leaf
nodes. Inner node overlap and dead space are two shortcomings in this
technique, which are addressed with \streak's \squad that uses
space-oriented partitioning. We compare \streak with sync. R-tree
traversal in Section~\ref{sec:experimentalResults}.

\noindent
\textbf{Top-$k$ Joins}\\
Hash-based Rank Join (HRJN)~\cite{ilyas2004rank} and Nested Loop Rank Join 
(NRJN)~\cite{ilyas2004supporting} are two widely used top-$k$ join algorithms 
in the relational world. HRJN accesses objects from left (or right) side of join 
and joins them with tuples seen so-far from right (or left) side of join. All 
join results are fed to a priority queue, which outputs the results to the users 
if they are above threshold, thus producing results incrementally. Nested Loop 
Rank Join (NRJN) algorithm is similar to HRJN, except that it follows a nested 
loop join strategy instead of buffering join inputs.  We do not compare \streak 
with a general top-$k$ join algorithms, HRJN and NRJN, since such top-$k$ 
operators for such spatial workloads have been shown to perform poorly when compared to a block-based 
algorithm~\cite{TopKSpatialJoins} and instead compare with the state-of-the-art 
spatial top-$k$ algorithm.

Quark-X~\cite{quarkX} is a recently proposed top-$k$ query processing engine,
which uses summarized numerical indexes for retrieving and ranking RDF facts.
Quark-X showed experimentally that it outperformed the other top-$k$ query
processing systems by a large margin. However, Quark-X is suited only for
1-dimensional data, as its techniques can not be applied for multi-dimensional
data (sorted order is a requisite for Quark-X). Hence, its approach leads to
many unnecessary comparison operations in the non-sorted dimension. \streak in
contrast reduces the number of comparisons required drastically, by using its
schema-aware \squad. 

\noindent
\textbf{Spatial Top-$k$ Joins}\\
Returning the top-$k$ results on spatial queries, where the final results are
ranked by using a distance metric has been already studied by Ljosa, et
al.~\cite{topKSpatialJoinProbabilistic}. However, such ranking mechanism often
have to be restricted to specific, pre-defined aggregation functions. In
comparison, \streak attempts to solve a more challenging problem with the
spatial constraint specified in the join predicate.

To the best of our knowledge, the work closely related to \streak is done by
\textit{Qi et al.,} ~\cite{TopKSpatialJoins}. In this approach \textit{Qi et
al.,} retrieve blocks of data ordered by object scores, from input datasets, which are then spatially joined. Their approach:
(1) requires data to be sorted based on the ranking function, making it
unsuitable for arbitrary user-defined ranking function. \streak, in contrast,
does not impose any such limitation on the data. (2) only supports spatial
joins and would require non-trivial modifications for supporting complex
patterns seen in SPARQL queries~\cite{weiss2008hexastore, neumann2008rdf,
neumann2009scalable}. \streak in comparison evaluates complex queries
efficiently. (3) uses \textit{only} block-wise retrieval for joining.
Block-wise retrieval is expected to be fast only if the results are found in
first few block accesses\jyoticomment{ from driver and driven sub-queries}. However, such an
approach suffers high overheads when blocks deep in the sorted collections need
to be retrieved. \jyoticomment{-- we validate this observation experimentally in
Section~\ref{sec:experimentalResults}}\streak circumvents this drawback by
adaptively switching between its customized plans at run-time (described in
sub-section~\ref{subsec:aqp}). (4) incur additional index creation cost at runtime, as
they build spatial indexes at run-time. \streak, on the other hand, builds its
spatial index \squad during pre-processing stage, thus incurring zero index creation overhead during query processing.

\noindent
\textbf{Adaptive Query Processing}\\ 
Eddies~\cite{avnur2000eddies} and Content Based Retrieval (CBR)
~\cite{bizarro2005content} are two state-of-the-art approaches which explored AQP for switching plans during query execution. Both these approaches use random tuples for
profiling, relying on a machine-learning based solution to model routing
predictions. Our work directly contrasts ML-based approaches by using
statistics for each \textit{block} of spatial data, which helps in determining
the selectivity of spatial join operator. By using fine-grained block-level
statistics, we mitigate the errors that are typically associated with a
model-based approach, especially when there are many joins involved.
Additionally, unlike CBR, which has high routing and learning
overhead, our spatial AQP algorithm incurs a very small routing overhead, owing
to cost estimate calculations for only the customized plans. We validated this argument and found the overhead of AQP to be very small --- just 5-10$\%$ of the overall overhead. Finally, routing predictions of Eddy and CBR depend on the model used and the size of the training datasets, while our statistics based approach does not require modeling and is relatively unaffected by the training dataset size.

\section{Conclusion} \label{sec:concl}
This paper introduces \streak, an efficient top-$k$ query processing engine with spatial filters. The novel contributions of this work are its soft-schema aware \squad spatial index, new spatial join algorithm which prunes the search space using its optimal node selection algorithm, and adaptive query plan generation algorithm called \textit{APS} which adaptively switches query plan at runtime with very low overhead. We show experimentally that \streak outperforms popular spatial join algorithms as well state-of-the-art end-to-end engines like PostgreSQL and Virtuoso. As part of future work, we plan to implement top-$k$ RDF reasoner with spatial filters.

{
\bibliographystyle{ACM-Reference-Format}
\bibliography{sample-bibliography} 
}

\section{Appendix}
\subsection{LGD Queries}
\begin{lstlisting}[escapechar=!]
!\bfseries{\normalsize{\textcolor{black}{Query 1:}}}!
BASE <http://yago-knowledge.org/resource/> .
PREFIX rdf: <http://www.w3.org/1999/02/22-rdf-syntax-ns#>
SELECT ?place ?nplace ?typePred1 ?typePred2  
WHERE {
  ?r rdf:subject ?place. 
  ?r rdf:predicate ?typePred1. 
  ?r rdf:object <http://geoknow.eu/uk_points#hotel>. 
  ?r <hasConfidence> ?conf. 
  ?place <http://yago-knowledge.org/resource/hasGeometry> ?long. 
  ?r1 rdf:subject ?nplace. 
  ?r1 rdf:predicate ?typePred2. 
  ?r1 rdf:object <http://geoknow.eu/uk_natural#park>. 
  ?r1 <hasConfidence> ?conf1. 
  ?nplace <http://yago-knowledge.org/resource/hasGeometry> ?nlong. 
  FILTER((?long, ?nlong) < 50)
} ORDER BY ASC(?conf + ?conf1) LIMIT k

!\bfseries{\normalsize{\textcolor{black}{Query 2:}}}!
BASE <http://yago-knowledge.org/resource/> .
PREFIX rdf: <http://www.w3.org/1999/02/22-rdf-syntax-ns#>
SELECT ?place ?nplace ?typePred1 ?typePred2 
WHERE {
  ?r rdf:subject ?place. 
  ?r rdf:predicate ?typePred1. 
  ?r rdf:object <http://geoknow.eu/uk_points#hotel>. 
  ?r <hasConfidence> ?conf. 
  ?place <http://yago-knowledge.org/resource/hasGeometry> ?long. 
  ?r1 rdf:subject ?nplace. 
  ?r1 rdf:predicate ?typePred2. 
  ?r1 rdf:object <http://geoknow.eu/uk_points#police>. 
  ?r1 <hasConfidence> ?conf1. 
  ?nplace <http://yago-knowledge.org/resource/hasGeometry> ?nlong. 
  FILTER((?long, ?nlong) < 50)
} ORDER BY ASC(?conf + ?conf1) LIMIT k

!\bfseries{\normalsize{\textcolor{black}{Query 3:}}}!
BASE <http://yago-knowledge.org/resource/> .
PREFIX rdf: <http://www.w3.org/1999/02/22-rdf-syntax-ns#>
SELECT ?place ?nplace ?typePred1 ?typePred2 ?label1 ?name2 
WHERE {
  ?r rdf:subject ?place. 
  ?r rdf:predicate ?typePred1. 
  ?r rdf:object <http://geoknow.eu/uk_points#hotel>. 
  ?r <hasConfidence> ?conf. 
  ?place <http://yago-knowledge.org/resource/hasGeometry> ?long. 
  ?place <http://www.w3.org/2000/01/rdf-schema#label> ?label1. 
  ?r1 rdf:subject ?nplace. 
  ?r1 rdf:predicate ?typePred2. 
  ?r1 rdf:object <http://geoknow.eu/uk_points#police>. 
  ?r1 <hasConfidence> ?conf1. 
  ?nplace <http://yago-knowledge.org/resource/hasGeometry> ?nlong. 
  ?nplace <http://geoknow.eu/uk_points#name> ?name2. 
  FILTER((?long, ?nlong) < 50)
} ORDER BY ASC(?conf + ?conf1) LIMIT k

!\bfseries{\normalsize{\textcolor{black}{Query 4:}}}!
BASE <http://yago-knowledge.org/resource/> .
PREFIX rdf: <http://www.w3.org/1999/02/22-rdf-syntax-ns#>
SELECT ?place ?nplace ?typePred1 ?typePred2 ?label1 ?name1 ?name2 
WHERE {
  ?r rdf:subject ?place. 
  ?r rdf:predicate ?typePred1. 
  ?r rdf:object <http://geoknow.eu/uk_points#pub>. 
  ?r <hasConfidence> ?conf. 
  ?place <http://yago-knowledge.org/resource/hasGeometry> ?long. 
  ?place <http://www.w3.org/2000/01/rdf-schema#label> ?label1. 
  ?place <http://geoknow.eu/uk_points#name> ?name1. 
  ?r1 rdf:subject ?nplace. 
  ?r1 rdf:predicate ?typePred2. 
  ?r1 rdf:object <http://geoknow.eu/uk_points#police>. 
  ?r1 <hasConfidence> ?conf1. 
  ?nplace <http://yago-knowledge.org/resource/hasGeometry> ?nlong. 
  ?nplace <http://geoknow.eu/uk_points#name> ?name2. 
  FILTER((?long, ?nlong) < 50)
} ORDER BY ASC(?conf + ?conf1) LIMIT k

!\bfseries{\normalsize{\textcolor{black}{Query 5:}}}!
BASE <http://yago-knowledge.org/resource/> .
PREFIX rdf: <http://www.w3.org/1999/02/22-rdf-syntax-ns#>
SELECT ?place ?nplace ?typePred1 ?typePred2 ?label1 ?name1 ?name2 ?label2 
WHERE {
  ?r rdf:subject ?place. 
  ?r rdf:predicate ?typePred1. 
  ?r rdf:object <http://geoknow.eu/uk_natural#park>. 
  ?r <hasConfidence> ?conf. 
  ?place <http://yago-knowledge.org/resource/hasGeometry> ?long. 
  ?place <http://www.w3.org/2000/01/rdf-schema#label> ?label1. 
  ?r1 rdf:subject ?nplace. 
  ?r1 rdf:predicate ?typePred2. 
  ?r1 rdf:object <http://geoknow.eu/uk_points#police>. 
  ?r1 <hasConfidence> ?conf1. 
  ?nplace <http://yago-knowledge.org/resource/hasGeometry> ?nlong. 
  ?nplace <http://geoknow.eu/uk_points#name> ?name2. 
  ?nplace <http://www.w3.org/2000/01/rdf-schema#label> ?label2.
  FILTER((?long, ?nlong) < 50)
} ORDER BY ASC(?conf + ?conf1) LIMIT k

!\bfseries{\normalsize{\textcolor{black}{Query 6:}}}!
BASE <http://yago-knowledge.org/resource/> .
PREFIX rdf: <http://www.w3.org/1999/02/22-rdf-syntax-ns#>
SELECT ?place ?nplace ?typePred1 ?typePred2 
WHERE {
  ?r rdf:subject ?place. 
  ?r rdf:predicate ?typePred1. 
  ?r rdf:object <http://geoknow.eu/uk_natural#park>. 
  ?r <hasConfidence> ?conf. 
  ?place <http://yago-knowledge.org/resource/hasGeometry> ?long. 
  ?r1 rdf:subject ?nplace. 
  ?r1 rdf:predicate ?typePred2. 
  ?r1 rdf:object <http://geoknow.eu/uk_roads#roads>. 
  ?r1 <hasConfidence> ?conf1. 
  ?nplace <http://yago-knowledge.org/resource/hasGeometry> ?nlong. 
  FILTER((?long, ?nlong) < 50)
} ORDER BY ASC(?conf + ?conf1) LIMIT k

!\bfseries{\normalsize{\textcolor{black}{Query 7:}}}!
BASE <http://yago-knowledge.org/resource/> .
PREFIX rdf: <http://www.w3.org/1999/02/22-rdf-syntax-ns#>
SELECT ?place ?nplace ?typePred1 ?typePred2 
WHERE {
  ?r rdf:subject ?place. 
  ?r rdf:predicate ?typePred1. 
  ?r rdf:object <http://geoknow.eu/uk_points#hotel>. 
  ?r <hasConfidence> ?conf. 
  ?place <http://yago-knowledge.org/resource/hasGeometry> ?long. 
  ?r1 rdf:subject ?nplace. 
  ?r1 rdf:predicate ?typePred2. 
  ?r1 rdf:object <http://geoknow.eu/uk_roads#roads>. 
  ?r1 <hasConfidence> ?conf1. 
  ?nplace <http://yago-knowledge.org/resource/hasGeometry> ?nlong. 
  FILTER((?long, ?nlong) < 50)
} ORDER BY ASC((++0+1*?conf/1) (++0+1*?conf1/1)) LIMIT k

!\bfseries{\normalsize{\textcolor{black}{Query 8:}}}!
BASE <http://yago-knowledge.org/resource/> .
PREFIX rdf: <http://www.w3.org/1999/02/22-rdf-syntax-ns#>
SELECT ?place ?nplace ?typePred1 ?typePred2 ?label2 
WHERE {
  ?r rdf:subject ?place. 
  ?r rdf:predicate ?typePred1. 
  ?r rdf:object <http://geoknow.eu/uk_natural#park>. 
  ?r <hasConfidence> ?conf. 
  ?place <http://yago-knowledge.org/resource/hasGeometry> ?long. 
  ?r1 rdf:subject ?nplace. 
  ?r1 rdf:predicate ?typePred2. 
  ?r1 rdf:object <http://geoknow.eu/uk_roads#roads>. 
  ?r1 <hasConfidence> ?conf1. 
  ?nplace <http://yago-knowledge.org/resource/hasGeometry> ?nlong. 
  ?nplace <http://www.w3.org/2000/01/rdf-schema#label> ?label2. 
  FILTER((?long, ?nlong) < 50)
} ORDER BY ASC(?conf + ?conf1) LIMIT k

\end{lstlisting}

\subsection{YAGO Queries}
\begin{lstlisting}[escapechar=!]
!\bfseries{\normalsize{\textcolor{black}{Query 1:}}}!
BASE <http://yago-knowledge.org/resource/> .
PREFIX rdf: <http://www.w3.org/1999/02/22-rdf-syntax-ns#>
SELECT ?place ?nplace ?loc1 ?loc2 
WHERE {
  ?place <http://yago-knowledge.org/resource/hasPopulationDensity> ?popul. 
  ?place <http://yago-knowledge.org/resource/hasGeometry> ?long. 
  ?place <http://yago-knowledge.org/resource/isLocatedIn> ?loc1. 
  ?nplace <http://yago-knowledge.org/resource/isLocatedIn> ?loc2. 
  ?nplace <http://yago-knowledge.org/resource/hasGeometry> ?nlong. 
  ?nplace <http://yago-knowledge.org/resource/hasNumberOfPeople> ?popul1. 
  FILTER((?long, ?nlong) < 50)
} ORDER BY ASC(?popul + ?popul1) LIMIT k

!\bfseries{\normalsize{\textcolor{black}{Query 2:}}}!
BASE <http://yago-knowledge.org/resource/> .
PREFIX rdf: <http://www.w3.org/1999/02/22-rdf-syntax-ns#>
SELECT ?place ?nplace ?loc1 ?loc2 
WHERE {
  ?place <http://yago-knowledge.org/resource/hasPopulationDensity> ?popul. 
  ?place <http://yago-knowledge.org/resource/hasEconomicGrowth> ?ecoGrowth. 
  ?place <http://yago-knowledge.org/resource/hasGeometry> ?long. 
  ?place <http://yago-knowledge.org/resource/isLocatedIn> ?loc1. 
  ?nplace <http://yago-knowledge.org/resource/isLocatedIn> ?loc2. 
  ?nplace <http://yago-knowledge.org/resource/hasGeometry> ?nlong. 
  ?nplace <http://yago-knowledge.org/resource/hasNumberOfPeople> ?popul1. 
  FILTER((?long, ?nlong) < 50)
} ORDER BY ASC(?popul + ?popul1 + ?ecoGrowth) LIMIT k

!\bfseries{\normalsize{\textcolor{black}{Query 3:}}}!
BASE <http://yago-knowledge.org/resource/> .
PREFIX rdf: <http://www.w3.org/1999/02/22-rdf-syntax-ns#>
SELECT ?place ?nplace ?loc1 ?loc2 ?connection 
WHERE {
  ?place <http://yago-knowledge.org/resource/hasEconomicGrowth> ?ecoGrowth. 
  ?connection <http://yago-knowledge.org/resource/isConnectedTo> ?place. 
  ?place <http://yago-knowledge.org/resource/hasGeometry> ?long. 
  ?place <http://yago-knowledge.org/resource/isLocatedIn> ?loc1. 
  ?nplace <http://yago-knowledge.org/resource/isLocatedIn> ?loc2. 
  ?nplace <http://yago-knowledge.org/resource/hasGeometry> ?nlong. 
  ?nplace <http://yago-knowledge.org/resource/hasNumberOfPeople> ?popul1. 
  FILTER((?long, ?nlong) < 50)
} ORDER BY ASC(?popul1 + ?ecoGrowth) LIMIT k

!\bfseries{\normalsize{\textcolor{black}{Query 4:}}}!
BASE <http://yago-knowledge.org/resource/> .
PREFIX rdf: <http://www.w3.org/1999/02/22-rdf-syntax-ns#>
SELECT ?place ?nplace ?loc1 ?loc2 ?birthPlace 
WHERE {
  ?place <http://yago-knowledge.org/resource/hasPopulationDensity> ?popul. 
  ?place <http://yago-knowledge.org/resource/hasEconomicGrowth> ?ecoGrowth. 
  ?place <http://yago-knowledge.org/resource/hasNeighbor> ?birthPlace. 
  ?place <http://yago-knowledge.org/resource/hasGeometry> ?long. 
  ?place <http://yago-knowledge.org/resource/isLocatedIn> ?loc1. 
  ?nplace <http://yago-knowledge.org/resource/isLocatedIn> ?loc2. 
  ?nplace <http://yago-knowledge.org/resource/hasGeometry> ?nlong. 
  ?nplace <http://yago-knowledge.org/resource/hasNumberOfPeople> ?popul1. 
  FILTER((?long, ?nlong) < 50)
} ORDER BY ASC(?popul + ?popul1 + ?ecoGrowth) LIMIT k

!\bfseries{\normalsize{\textcolor{black}{Query 5:}}}!
BASE <http://yago-knowledge.org/resource/> .
PREFIX rdf: <http://www.w3.org/1999/02/22-rdf-syntax-ns#>
SELECT ?d?location?nplace?a?b?person 
WHERE {
  ?r rdf:subject ?b. 
  ?r rdf:predicate <http://yago-knowledge.org/resource/diedIn>. 
  ?r rdf:object ?a. 
  ?r <http://yago-knowledge.org/resource/hasConfidence> ?conf. 
  ?a <http://yago-knowledge.org/resource/isLocatedIn> ?d. 
  ?d <http://yago-knowledge.org/resource/hasGeometry> ?long. 
  ?r1 rdf:subject ?person. 
  ?r1 rdf:predicate <http://yago-knowledge.org/resource/wasBornIn>. 
  ?r1 rdf:object ?nplace. 
  ?nplace <http://yago-knowledge.org/resource/isLocatedIn> ?location. 
  ?r1 <http://yago-knowledge.org/resource/hasConfidence> ?popul1. 
  ?location <http://yago-knowledge.org/resource/hasGeometry> ?nlong. 
  FILTER((?long, ?nlong) < 50)
} ORDER BY ASC(?popul1 + ?conf) LIMIT k

!\bfseries{\normalsize{\textcolor{black}{Query 6:}}}!
BASE <http://yago-knowledge.org/resource/> .
PREFIX rdf: <http://www.w3.org/1999/02/22-rdf-syntax-ns#>
SELECT ?b?nplace?a?loc2 
WHERE {
  ?r rdf:subject ?a. 
  ?r rdf:predicate <http://yago-knowledge.org/resource/happenedIn>. 
  ?r rdf:object ?b. 
  ?r <http://yago-knowledge.org/resource/hasConfidence> ?conf. 
  ?b <http://yago-knowledge.org/resource/hasGeometry> ?long. 
  ?b <http://yago-knowledge.org/resource/hasInflation> ?d. 
  ?nplace <http://yago-knowledge.org/resource/isLocatedIn> ?loc2. 
  ?nplace <http://yago-knowledge.org/resource/hasGeometry> ?nlong. 
  ?nplace <http://yago-knowledge.org/resource/hasNumberOfPeople> ?popul1. 
  FILTER((?long, ?nlong) < 50)
} ORDER BY ASC(?d + ?popul1 + ?conf) LIMIT k

!\bfseries{\normalsize{\textcolor{black}{Query 7:}}}!
BASE <http://yago-knowledge.org/resource/> .
PREFIX rdf: <http://www.w3.org/1999/02/22-rdf-syntax-ns#>
SELECT ?a?nplace?b?loc2 
WHERE {
  ?nplace <http://yago-knowledge.org/resource/isLocatedIn> ?loc2. 
  ?nplace <http://yago-knowledge.org/resource/hasGeometry> ?nlong. 
  ?nplace <http://yago-knowledge.org/resource/hasEconomicGrowth> ?popul1. 
  ?r rdf:subject ?a. 
  ?r rdf:predicate <http://yago-knowledge.org/resource/isLocatedIn>. 
  ?r rdf:object ?b.
  ?r <http://yago-knowledge.org/resource/hasConfidence> ?conf. 
  ?a <http://yago-knowledge.org/resource/hasGeometry> ?long. 
  FILTER((?long, ?nlong) < 50)
} ORDER BY ASC(?popul1 + ?conf) LIMIT k

!\bfseries{\normalsize{\textcolor{black}{Query 8:}}}!
SELECT ?b?nplace?a?loc2 
WHERE {
  ?r1 rdf:subject ?loc2. 
  ?r1 rdf:predicate <http://yago-knowledge.org/resource/wasBornIn>. 
  ?r1 rdf:object ?nplace. 
  ?r1 <http://yago-knowledge.org/resource/hasConfidence> ?conf1. 
  ?nplace <http://yago-knowledge.org/resource/hasGeometry> ?nlong. 
  ?r rdf:subject ?a. 
  ?r rdf:predicate <http://yago-knowledge.org/resource/isLocatedIn>. 
  ?r rdf:object ?b. 
  ?r <http://yago-knowledge.org/resource/hasConfidence> ?conf. 
  ?b <http://yago-knowledge.org/resource/hasGeometry> ?long. 
  ?a <http://yago-knowledge.org/resource/hasPopulationDensity> ?d. 
  FILTER((?long, ?nlong) < 50)
} ORDER BY ASC(?d + ?conf1 + ?conf) LIMIT k

\end{lstlisting}

\end{document}